\newcommand{\Rb}{\mathbbm{R}}      
\newcommand{\Bc}{\mathcal{B}}
\newcommand{\Qc}{\mathcal{Q}}
\newcommand{\Eb}{\mathbbm{E}}
\newcommand{\Pc}{\mathcal{P}}
\newcommand{\Xc}{\mathcal{X}}
\newcommand{\Yc}{\mathcal{Y}}
\newcommand{\Wc}{\mathcal{W}}
\newcommand{\1}{\mathbbm{1}}
\newcommand{\D}{\mathop{\text{\rm d}\!}}
\newcommand{\BLUE}[1]{{\color{blue} #1}}
    \definecolor{plum}{rgb}{0.3,0,0.7}
\newtheorem{theorem}{Theorem}
\newtheorem{remark}{Remark}%
\newtheorem{lemma}{Lemma}%
\newtheorem{corollary}{Corollary}%
\newtheorem{definition}{Definition}%
\newtheorem{assumption}{Assumption}%
\title{Leveraging Optimal Transport for Distributed Two-Sample Testing: An Integrated Transportation Distance-based Framework
}
\author{ Yan Chen\thanks{All authors equally contributed to this work and are listed in the alphabetic order.}\\
  University of Science and
Technology of China \\
  Hefei\\
  \texttt{yanc@mail.ustc.edu.cn}\\
     \And
  Zhengqi Lin$^{*,}$\\
  Rutgers University \\
  New Brunswick\\
  \texttt{zhengqi.lin@rutgers.edu} \\
 } 
\begin{document}

\maketitle

\begin{abstract}
This paper introduces a novel framework for distributed two-sample testing using the Integrated Transportation Distance (ITD), an extension of the Optimal Transport distance. The approach addresses the challenges of detecting distributional changes in decentralized learning or federated learning environments, where data privacy and heterogeneity are significant concerns. We provide theoretical foundations for the ITD, including convergence properties and asymptotic behavior. A permutation test procedure is proposed for practical implementation in distributed settings, allowing for efficient computation while preserving data privacy. The framework's performance is demonstrated through theoretical power analysis and extensive simulations, showing robust Type I error control and high power across various distributions and dimensions. The results indicate that ITD effectively aggregates information across distributed clients, detecting subtle distributional shifts that might be missed when examining individual clients. This work contributes to the growing field of distributed statistical inference, offering a powerful tool for two-sample testing in modern, decentralized data environments.
\end{abstract}

\keywords{Distributed Computation \and Two-sample Test \and Optimal Transport \and Wasserstein Distance \and Integrated Transportation Distance \and Distributed Learning \and Federated Learning}

\section{Introduction}

Let $\mu$ and $\nu$ be two underlying probability measures. Given two mutually independent samples $X_1, \ldots, X_m \sim \mu$ and $Y_1, \ldots, Y_n \sim \nu$. The goal of nonparametric two-sample testing is to determine whether the null hypothesis $H_0 : \mu = \nu$ holds, without imposing parametric assumptions on the distributions. A test statistic $T_{m,n}$ is a function of the sample data used to quantify the evidence against the null hypothesis in a statistical hypothesis test. A standard class of nonparametric two-sample tests rejects the null hypothesis $H_0: \mu = \nu$ if the test statistic $T_{m,n} = T_{m,n}(X_1, \ldots, X_m, Y_1, \ldots, Y_n)$ exceeds a critical value $c_{m,n}$, where $T_{m,n}$ is a scalar function of the mutually independent samples $X_1, \ldots, X_m \sim \mu$ and $Y_1, \ldots, Y_n \sim \nu$, and $c_{m,n}$ is chosen to achieve a desired significance level $\alpha \in (0,1)$. Such a sequence of tests is said to have asymptotic level $\alpha$ if $\limsup_{m,n \rightarrow \infty} \mathbb{P}(T_{m,n} > c_{m,n}) \leq \alpha$ whenever $\mu = \nu$, meaning that the probability of rejecting $H_0$ when it is true (Type I error) is at most $\alpha$ in the limit as the sample sizes $m$ and $n$ grow to infinity. Moreover, these tests are considered asymptotically consistent if $\lim_{m,n \rightarrow \infty} \mathbb{P}(T_{m,n} > c_{m,n}) = 1$ whenever $\mu \neq \nu$, implying that the probability of correctly rejecting $H_0$ when it is false (power) approaches 1 as the sample sizes increase. 

\subsection{Two sample test for distributed devices}

We consider a distributed machine learning setting with numerous devices, assumed to be stateful and participating at each round, and a central server that coordinates training across the clients. The training data are decentralized and arrive over time. We want to explore the possibility of developing a two-sample test under such a distributed learning setting, which is widely applicable in areas such as social media optimization and medical diagnosis. Under this setting, it is generally assumed that the local devices' data are preserved privately and may originate from different distributions across devices. The data at each device $k = 1, \ldots, K$ are sampled from distributions $P^k$ or $Q^k$, respectively. We say that there is a difference at client $k$ if $P^k \neq Q^k$. Our goal is to test the following distributed two-sample testing:
$$H_0:\;P^k =Q^k \text{ for all } k, \quad \quad\text{versus}\quad \quad H_1:\;\text{There exists at least one }k \text{ such that }  P^k \neq Q^k.$$

Detecting differences in distributions in a distributed machine learning setting poses several challenges that can be largely categorized into two aspects: practical implementation and theoretical reliability. 

Regarding practical implementation, data privacy and security concerns may arise, as devices may be reluctant to share their raw data, making it difficult to directly compare distributions across clients. Communication constraints, such as limited bandwidth or network latency, and the heterogeneity across devices in terms of computational capabilities or hardware specifications, can hinder the transmission of large amounts of data or statistics to a central server for analysis. As the number of devices or clients increases, scalability becomes a concern, with computational complexity and communication overhead potentially becoming bottlenecks.

From a theoretical perspective, the data may exhibit heterogeneity across devices, violating the assumptions of many traditional two-sample tests. Detecting and adapting to the differences in distributions can be challenging, especially when the degree of differences is not uniform across devices. Furthermore, the two-sample testing procedure should be robust to various types of noise, outliers, or adversarial attacks that may be present in the distributed setting, while maintaining interpretability to understand the nature and sources of distribution differences.

The Optimal Transport distance or Wasserstein distance, rooted in optimal transport theory, has gained attention for its ability to capture geometric aspects of distributions \cite{Ramdas_Garcia_Cuturi_2015}. An extension of this distance, named the Integrated Transportation Distance (ITD), was introduced by \cite{lin2023integrated,lin2023} to establish a novel metric between stochastic (Markov) kernels. In the original work, the ITD was employed to construct approximations of Markov systems and assess the accuracy of these approximations. Building upon these findings, we recognize that the ITD has potential applications beyond Markov systems, particularly in two-sample testing under distributed or federated learning settings. 
ITD is particularly well-suited for two-sample testing in distributed settings for several reasons. It integrates statistical distances from individual devices without requiring centralized data access, allows for weighted contributions that account for heterogeneity among devices, and can provide accurate approximations even with partial client participation. Furthermore, ITD maintains interpretability and computational efficiency throughout the process. These characteristics collectively make ITD an ideal tool for detecting distributional changes in decentralized data environments.

\subsection{Relevant work}

Statistical distances have played a crucial role in two-sample testing, evolving from classical methods like the Kolmogorov-Smirnov test \cite{Massey_1951} and Wilcoxon-Mann-Whitney test \cite{mannhb,Wilcoxon1945IndividualCB} to more sophisticated approaches. Recent advancements include energy
distance \cite{szekely2004testing}, Maximum Mean Discrepancy (MMD) \cite{gretton12a}, ball divergence \cite{pan2018ball}, projection averaging-type Cramér–von Mises statistics \cite{kim2020robust}.  However, these methods are primarily designed for single-device settings and are unsuitable for distributed frameworks.

In the context of distributed hypothesis testing, \cite{chang2024statistical} developed a moving-subgroup algorithm  aimed at improving computational efficiency for a global testing, though this approach does not guarantee the data privacy. \cite{Cai2024FederatedNH} established the minimax separation rate for goodness-of-fit testing under privacy constraint.  Distributed hypothesis testing has found applications in areas such as sensor networks \cite{8293844}, multiple testing \cite{Liu2020IntegrativeHD}. To our knowledge, no existing distributed two-sample testing methods simultaneously address data privacy and heterogeneity—a critical challenge in real-world settings such as hospitals and schools.

\section{The Optimal Transport Distance}

The essential ingredient of our construction is the Optimal Transport (OT) distance or Wasserstein distance between measures. In here,
$\Xc$ is a Polish space, with the metric $d(\cdot,\cdot)$, and the associated Borel $\sigma$-field $\Bc(\Xc)$.
The symbol $\Pc(\Xc)$ denotes the space of probability measures on $\Bc(\Xc)$.
In the brief summary below, we follow \cite{villani2009optimal}. The reader is referred to this monograph, as well as to \cite{rachev1998mass}
for an extensive exposition and historical account.

On the space $\Pc(\Xc)$, the OT distance of order $p$, $W_{p}(\cdot,\cdot)$, is not a metric in the strict sense, because it might take the value $+\infty$. We restrict its domain to a subset of $\Pc(\mathcal{X})$  on which it only takes finite values.
    The \textit{measure space of order} $p$ is defined as
    \begin{equation*}
\Pc_{p}(\mathcal{X}) = \left\{\mu \in \Pc(\mathcal{X}) : \ \int_{\mathcal{X}} d\left(x_{0}, x\right)^{p} \;\mu(\D x)<+\infty\right\},
\end{equation*}
where $x_{0} \in \mathcal{X}$ is arbitrary. This space does not depend on the choice of the point $x_{0}$.


\begin{definition} \label{Wass_d1}
 For two probability measures $\mu, \nu\in \Pc_p(\Xc)$, the OT distance of order $p \in[1, \infty)$ between $\mu$ and $\nu$ is defined by the formula
\begin{equation} \label{Wass}
\begin{aligned}
W_{p}(\mu, \nu) &=\left(\inf _{\pi \in \Pi(\mu, \nu)} \int_{\mathcal{X} \times \mathcal{X} } d(x, y)^{p} \;  \pi({\D x}, {\D y})\right)^{1 / p} \\
 &= \inf \left\{\left({\Eb} \big[d(X, Y)^{p}\big] \right)^{\frac{1}{p}}:\; \text {\rm law}(X)=\mu, \;\text {\rm law}(Y)=\nu\right\},
\end{aligned}
\end{equation}
where $\Pi(\mu, \nu)$, called the coupling set, is the set of all probability measures in $\Pc(\Xc\times\Xc)$ with the marginals $\mu$ and $\nu$.
\end{definition}
The measure $\pi^* \in \Pi(\mu, \nu)$ that realizes the infimum in Equation \eqref{Wass} is called the \emph{optimal coupling} or the \emph{optimal transport plan}. For each $p\in [1,\infty)$, the function $W_{p}(\cdot,\cdot)$ defines a metric on $\Pc_{p}(\mathcal{X})$. Furthermore, for all $\mu,\nu\in \Pc_{p}(\mathcal{X})$ the optimal coupling
realizing the infimum in \eqref{Wass} exists. From now on, $\Pc_{p}(\mathcal{X})$ will be always equipped with the distance $W_p(\cdot,\cdot)$.

The following classical result, known as the Kantorovich–Rubinstein duality, provides an alternative characterization of $W_1(\cdot,\cdot)$.
\begin{theorem}
For any $\mu, \nu$ in $\Pc_{1}(\mathcal{X})$,
\begin{equation*}\label{dual_wass}
W_{1}(\mu, \nu)=\sup _{\|\psi\|_{\text {\rm Lip}} \leq 1}\left\{\int_{\mathcal{X}} \psi(x)\; \mu({\D x})-\int_{\mathcal{X}} \psi(x) \; \nu({\D x})\right\},
\end{equation*}
where $\|\psi\|_{\text{\rm Lip}}$ denotes the minimal Lipschitz constant of the function $\psi:\Xc\to \Rb$.
\end{theorem}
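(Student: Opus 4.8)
The plan is to establish the equality as two inequalities, the elementary one by direct coupling manipulation and the substantive one by invoking the general Kantorovich duality and then specializing the cost to $c=d$.

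\emph{The inequality ``$\ge$''.} Fix any $\psi:\Xc\to\Rb$ with $\|\psi\|_{\text{\rm Lip}}\le 1$ and any coupling $\pi\in\Pi(\mu,\nu)$. Since the marginals of $\pi$ are $\mu$ and $\nu$, and since $|\psi(x)-\psi(y)|\le d(x,y)$,
\begin{equation*}
\int_{\Xc}\psi\,\D\mu-\int_{\Xc}\psi\,\D\nu=\int_{\Xc\times\Xc}\big(\psi(x)-\psi(y)\big)\,\pi(\D x,\D y)\le\int_{\Xc\times\Xc}d(x,y)\,\pi(\D x,\D y).
\end{equation*}
Taking the infimum over $\pi\in\Pi(\mu,\nu)$ on the right gives $\int\psi\,\D\mu-\int\psi\,\D\nu\le W_1(\mu,\nu)$, and then the supremum over admissible $\psi$ yields this direction. (Both integrals are finite because $\mu,\nu\in\Pc_{1}(\Xc)$ and the Lipschitz bound gives $|\psi(x)|\le|\psi(x_0)|+d(x_0,x)$.)

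\emph{The inequality ``$\le$''.} Here I would invoke the general Kantorovich duality for the lower semicontinuous cost $c(x,y)=d(x,y)$, namely
\begin{equation*}
\inf_{\pi\in\Pi(\mu,\nu)}\int_{\Xc\times\Xc}d(x,y)\,\pi(\D x,\D y)=\sup\left\{\int_{\Xc}\varphi\,\D\mu+\int_{\Xc}\zeta\,\D\nu\right\},
\end{equation*}
where the supremum is over all $\varphi\in L^1(\mu)$ and $\zeta\in L^1(\nu)$ with $\varphi(x)+\zeta(y)\le d(x,y)$ for all $x,y\in\Xc$. This no-duality-gap identity is the crux of the matter: one realizes the transport problem as a linear program over nonnegative measures with prescribed marginals and verifies strong duality either through a minimax theorem (e.g.\ Sion's) or through the Fenchel--Rockafellar theorem in the pairing between bounded continuous functions and finite signed measures, followed by a tightness-and-approximation step to lift the boundedness and continuity restrictions on the admissible pair $(\varphi,\zeta)$. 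I expect this to be the main obstacle, and within the scope of this paper I would simply cite it from \cite{villani2009optimal} rather than reproduce it.

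\emph{Reduction to $1$-Lipschitz potentials.} Granting the displayed duality, it remains to show its right-hand side equals $\sup_{\|\psi\|_{\text{\rm Lip}}\le 1}\{\int\psi\,\D\mu-\int\psi\,\D\nu\}$. Given an admissible pair $(\varphi,\zeta)$, define the $c$-transform $\psi(x):=\inf_{y\in\Xc}\{d(x,y)-\zeta(y)\}$; picking any $y_0$ with $\zeta(y_0)$ finite shows $\psi$ is everywhere finite. Then: (i) $\psi\ge\varphi$ pointwise, since admissibility gives $\varphi(x)\le d(x,y)-\zeta(y)$ for every $y$; (ii) $\psi$ is $1$-Lipschitz, being an infimum of the $1$-Lipschitz maps $x\mapsto d(x,y)-\zeta(y)$; and (iii) $\zeta\le-\psi$ pointwise, since the choice $y=x$ gives $\psi(x)\le d(x,x)-\zeta(x)=-\zeta(x)$. (One also checks, using the triangle inequality and $\psi(x)\le\psi(y)+d(x,y)$, that the reverse $c$-transform collapses to $\psi^c(y)=\inf_x\{d(x,y)-\psi(x)\}=-\psi(y)$, which is the characteristic feature of the $p=1$ case.) Combining (i) and (iii),
\begin{equation*}
\int_{\Xc}\varphi\,\D\mu+\int_{\Xc}\zeta\,\D\nu\le\int_{\Xc}\psi\,\D\mu-\int_{\Xc}(-\psi)\,\D\nu=\int_{\Xc}\psi\,\D\mu-\int_{\Xc}\psi\,\D\nu,
\end{equation*}
with $\psi$ admissible on the right (again $\mu$- and $\nu$-integrable, after normalizing by a constant, by the Lipschitz bound and $\mu,\nu\in\Pc_{1}(\Xc)$). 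Passing to the supremum identifies the two optimal values, and chaining with the ``$\ge$'' inequality already proved gives the asserted equality.
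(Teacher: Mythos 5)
Your proposal is correct, but it is worth noting that the paper does not actually prove this statement: it presents the Kantorovich--Rubinstein duality as a classical result, refers the reader to \cite{villani2009optimal}, and only remarks that in the discrete case it follows from linear programming duality. Your argument supplies the standard proof that the paper omits. The ``$\ge$'' direction via an arbitrary coupling and the marginal property is exactly right, including the integrability remark that $|\psi(x)|\le|\psi(x_0)|+d(x_0,x)$ is $\mu$-integrable for $\mu\in\Pc_1(\Xc)$. For the ``$\le$'' direction you correctly identify that the entire substance lies in the no-duality-gap statement of general Kantorovich duality, which you (reasonably, and consistently with the paper's own practice) propose to cite rather than reprove; the subsequent reduction to $1$-Lipschitz potentials via the $c$-transform $\psi(x)=\inf_y\{d(x,y)-\zeta(y)\}$, with the three observations $\psi\ge\varphi$, $\|\psi\|_{\text{\rm Lip}}\le 1$, and $\zeta\le-\psi$, is the standard and correct way to specialize to the $p=1$ cost. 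One minor point you may wish to make explicit: to conclude that $\psi$ is finite everywhere (not just $1$-Lipschitz where finite), note that if $\psi(x)=-\infty$ at one point then $\psi\equiv-\infty$ by the Lipschitz bound, contradicting $\psi\ge\varphi$ at any point where $\varphi$ is finite; since $\varphi\in L^1(\mu)$, such a point exists. With that caveat the argument is complete and goes strictly beyond what the paper records.
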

In the discrete case, it follows from the linear programming duality.

For multidimensional state space, \cite{del2019central} establish asymptotic normality of $\sqrt{n}\left(W_2^2\left(\mu_m, \nu_n\right)-\mathbb{E}\left[W_2^2\left(\mu_m, \nu_n\right)\right]\right)$ by deriving an asymptotic linear representation using the Efron-Stein variance inequality. \cite[Thm 3.3]{del2019central} also provides variance bounds for the two-sample empirical transportation cost between two sets of i.i.d. random variables $X_1, \ldots, X_m$ with law $\mu$ and $Y_1, \ldots, Y_n$ with law $\nu$. If $\mu$ and $\nu$ have densities and finite fourth moments, then
\begin{equation*}
\mathrm{Var}(W_2^2(\mu_m, \nu_n)) \leq \frac{C_{\mu,\nu}}{m} + \frac{C_{\nu,\mu}}{n},
\end{equation*}
where $C_{\mu,\nu}$ is a constant depending on the moments of $\mu$ and $\nu$.

\section{The Integrated Transportation Distance Between Kernels} \label{s:kernel-distance}

We now introduce an essential concept for distributed two-sample testing: the Integrated Transportation Distance between kernels. This concept is the cornerstone of our approach. In previous literature, it was originally developed for risk evaluation in multi-stage Markov systems \cite{IISE1,IISE2,ZL1}. Its properties qualify it as a useful tool for dealing with statistical heterogeneity among many devices.
\

Suppose $\Xc$ and $\Yc$ are Polish spaces.  By the measure disintegration formula, every probability measure $\theta\in \Pc(\Xc\times\Yc)$ admits a disintegration
$\theta= {\lambda} \circledast Q$, where ${\lambda} \in \Pc(\Xc)$ is the marginal distribution on $\Xc$, and $Q:\Xc \to \Pc(\Yc)$ is a \emph{kernel} (a function
such that for each $B\in \Bc(\Yc)$ the mapping $x\mapsto Q(B|x)$ is Borel measurable):
\[
\theta(A \times B) = \int_A  Q(B|x)\;{\lambda}({\D x}), \forall \big(A\in \Bc(\Xc),B \in \Bc(\Yc)\big).
\]
Conversely, given a marginal ${\lambda} \in \Pc(\Xc)$  and a kernel $Q:\Xc \to \Pc(\Yc)$, the above formula defines a probability measure ${\lambda} \circledast Q$
on $\Xc \times \Yc$.
Its marginal on $\Yc$ is the \emph{mixture distribution} $\lambda \circ Q$ given by
\[
(\lambda \circ Q)(B) = \int_\Xc  Q(B|x)\;{\lambda}({\D x}), \quad \forall \, B \in \Bc(\Yc).
\]
To define a distance between kernels with the use of the Wasserstein metric in the space of probability measures,
we restrict the class of kernels under consideration. We use the same symbol $d(\cdot,\cdot)$ to denote the metrics on $\Xc$ and $\Yc$. 

\begin{definition}
\label{d:kernel-space} \cite[Defintion 3.1]{lin2023integrated}
The kernel space of order $p\in [1,\infty)$  is the set
\begin{align*}
\label{kernel-class}
\Qc_p(\Xc,\Yc) =& \Big\{ Q:\Xc \to \Pc_p(\Yc) : \forall \big(B\in \Bc(\Yc)\big) \; 
Q(B|\,\cdot\,) \text{ is Borel measurable}, \\
&\quad \exists (C>0)\,{  \forall(x\in \Xc)}  \int_\Yc d(y,y_0)^p\; Q({\D y}|x) \le C\big(1 + d(x,x_0)^p\big)\Big\}.
\end{align*}
\end{definition}
It is evident that the choice of the points $x_0\in \Xc$ and $y_0\in \Yc$ is irrelevant in this definition. The kernel space $\mathcal{Q}_p(\mathcal{X},\mathcal{Y})$ ensures that ITD takes finite values by imposing moment conditions on the kernels, which is crucial for two-sample testing as it guarantees the existence and well-definedness of the test statistic, allowing for meaningful comparisons between distributions and the development of rigorous statistical inference procedures in distributed learning settings.

\begin{definition}
\label{d:kernel-distance} \cite[Defintion 3.2]{lin2023integrated}
The integrated transportation distance of degree $p$ between two kernels $P$ and $Q$ in $\Qc_p(\Xc,\Yc)$ with fixed marginal ${\lambda}\in \Pc_p(\Xc)$ is defined as
\begin{equation*} \label{TS}
    \Wc_{p}^\lambda(P, Q)=\left(\int_{\Xc} \big[{W}_{p}(P(\,\cdot\, | x), Q(\,\cdot\, | x))\big]^p \;{\lambda}(\D x)\right)^{1/p} .
\end{equation*}
\end{definition}
From now on, for a fixed marginal $\lambda\in \Pc_p(\Xc)$, we shall identify the kernels $P$ and $Q$
if ${W}_{p}(P(\,\cdot\, | x), Q(\,\cdot\, | x))=0$ for $\lambda$-almost all $x\in \Xc$. {Thus, we consider the
space $\Qc_p^\lambda(\Xc,\Yc)$ of equivalence classes of $\Qc_p(\Xc,\Yc)$, with a fixed marginal $\lambda\in \Pc_p(\Xc)$. }If $\lambda$ has finite support ${x_1, \ldots, x_K}$ with weights $w_1, \ldots, w_K$ respectively, where $\sum_{k=1}^Kw_k=1$. The integrated transportation distance of degree $p$ between $P$ and $Q$ can be written as
\begin{equation*} \label{TS-finite}
\Wc_{p}^\lambda(P, Q)=\left(\sum_{k=1}^K w_k\big[{W}_{p}(P(\cdot | x_k), Q(\cdot| x_k))\big]^p\right)^{1/p} .
\end{equation*}

\begin{theorem} \label{TS_T}\cite[Thm 3.3]{lin2023integrated} For any $p\in [1,\infty)$ and any $\lambda\in \Pc_p(\Xc)$, the function $\Wc_p^\lambda(\cdot,\cdot)$, defines a  metric on the space $\Qc_p^\lambda(\Xc,\Yc)$.
\end{theorem}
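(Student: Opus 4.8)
The plan is to verify the metric axioms for $\Wc_p^\lambda$ one at a time, exploiting two facts recalled earlier: $W_p(\cdot,\cdot)$ is already a genuine metric on $\Pc_p(\Yc)$, and $\Wc_p^\lambda(P,Q)$ is by construction the $L^p(\lambda)$-norm of the scalar function $x\mapsto W_p(P(\cdot\,|x),Q(\cdot\,|x))$. With this reading, each property of $\Wc_p^\lambda$ will follow either from the corresponding property of $W_p$ holding $\lambda$-almost everywhere, or from a classical $L^p(\lambda)$ inequality.

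First I would show the quantity is well defined and finite. The cost $d(\cdot,\cdot)^p$ is lower semicontinuous, so $(\mu,\nu)\mapsto W_p(\mu,\nu)$ is lower semicontinuous, hence Borel, on $\Pc(\Yc)\times\Pc(\Yc)$ equipped with the topology of weak convergence; composing with the Borel map $x\mapsto\big(P(\cdot\,|x),Q(\cdot\,|x)\big)$ shows that $x\mapsto W_p(P(\cdot\,|x),Q(\cdot\,|x))$ is Borel measurable, so the integral is meaningful. For finiteness, route a coupling through the Dirac mass $\delta_{y_0}$: the triangle inequality for $W_p$ together with $(a+b)^p\le 2^{p-1}(a^p+b^p)$ and $W_p(\mu,\delta_{y_0})^p=\int_\Yc d(y,y_0)^p\,\mu(\D y)$ give
\[
W_p(P(\cdot\,|x),Q(\cdot\,|x))^p\le 2^{p-1}\Big(\int_\Yc d(y,y_0)^p\,P(\D y|x)+\int_\Yc d(y,y_0)^p\,Q(\D y|x)\Big)\le 2^{p}C\big(1+d(x,x_0)^p\big),
\]
where $C$ bounds the moment constants of $P$ and $Q$ from Definition~\ref{d:kernel-space}. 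Since $\lambda\in\Pc_p(\Xc)$, the right-hand side is $\lambda$-integrable, hence $\Wc_p^\lambda(P,Q)<\infty$; non-negativity is immediate because the integrand is non-negative.

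Next, the identity of indiscernibles: as the integrand is non-negative, $\Wc_p^\lambda(P,Q)=0$ holds iff $W_p(P(\cdot\,|x),Q(\cdot\,|x))=0$ for $\lambda$-a.e.\ $x$, and since $W_p$ is a metric on $\Pc_p(\Yc)$ this is exactly $P(\cdot\,|x)=Q(\cdot\,|x)$ for $\lambda$-a.e.\ $x$, i.e.\ $P$ and $Q$ coincide as elements of $\Qc_p^\lambda(\Xc,\Yc)$. Symmetry passes through pointwise from the symmetry of $W_p$. For the triangle inequality, given a third kernel $R\in\Qc_p(\Xc,\Yc)$ with the same marginal, the triangle inequality for $W_p$ gives $W_p(P(\cdot\,|x),R(\cdot\,|x))\le W_p(P(\cdot\,|x),Q(\cdot\,|x))+W_p(Q(\cdot\,|x),R(\cdot\,|x))$ for every $x$; taking $L^p(\lambda)$-norms, using monotonicity of the norm on the left and Minkowski's inequality on the right, yields $\Wc_p^\lambda(P,R)\le\Wc_p^\lambda(P,Q)+\Wc_p^\lambda(Q,R)$.

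I expect the only genuinely delicate point to be the measurability of $x\mapsto W_p(P(\cdot\,|x),Q(\cdot\,|x))$; everything else is a transparent transfer of properties through the $L^p(\lambda)$-norm. If the lower-semicontinuity route is felt to need more detail, an alternative is to use the dual representation of $W_p$ (the Kantorovich--Rubinstein theorem and its $W_p$ analogue) together with a measurable maximum theorem, or simply to invoke the measurability already recorded in \cite{lin2023integrated}.
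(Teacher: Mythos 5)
Your proof is correct. The paper itself offers no argument for this statement --- it is quoted verbatim from \cite[Thm 3.3]{lin2023integrated} --- but your argument is the standard one for that result: read $\Wc_p^\lambda(P,Q)$ as the $L^p(\lambda)$-norm of $x\mapsto W_p(P(\cdot\,|x),Q(\cdot\,|x))$, so that definiteness and symmetry transfer pointwise from the metric $W_p$ on $\Pc_p(\Yc)$ (with definiteness holding only $\lambda$-a.e., which is exactly why the theorem is stated on the quotient space $\Qc_p^\lambda$), and the triangle inequality follows from the pointwise triangle inequality for $W_p$ plus Minkowski. You also correctly flag and resolve the one non-routine point, namely Borel measurability of the integrand, via lower semicontinuity of the optimal transport cost together with the fact that a kernel is a Borel map into $\Pc(\Yc)$ with its weak topology; the finiteness bound through $\delta_{y_0}$ using the moment condition in Definition~\ref{d:kernel-space} is likewise sound.
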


Theorem \ref{TS_T} establishes ITD as a proper metric on the space of kernels, which is fundamental for two-sample testing as it ensures that the distance between distributions is well-defined and satisfies essential properties like non-negativity, symmetry, and the triangle inequality.

ITD additionally serves as an upper bound for the distances between two mixture distributions and between two composition distributions.
For all $\lambda \in \Pc_p(\Xc)$ and all $P ,  Q \in \Qc_p^\lambda(\Xc,\Yc)$, we have
\begin{equation*}
\label{hierarchy}
\Wc_p^\lambda(P,Q) \geq {W}_p({\lambda} \circledast P,{\lambda} \circledast Q) \geq
{W}_p({\lambda} \circ P,{\lambda} \circ Q);
\end{equation*}
see \cite[Thm 3.5]{lin2023integrated}. This property is particularly beneficial for data privacy in distributed learning environments, as it enables the comparison of mixture distributions without requiring direct access to individual data points, instead relying on aggregated statistics that preserve the privacy of local datasets.

The Lipschitz continuity of the kernels plays an important role in our research.

\begin{assumption}
   A kernel $Q:\Xc \to \Pc_p(\Yc)$ is Lipschitz continuous if a constant $L_Q$ exists, such that
\begin{equation*}
    \label{Dobrushin}
W_p(Q(\,\cdot\, | x),Q(\,\cdot\, | x')) \le L_{Q}\, d(x,x'),\quad \forall\;x,x'\in \Xc.
\end{equation*}
\end{assumption}

The following theorem validates the use of ITD in distributed two-sample testing by ensuring that our empirical estimates converge to the true ITD as we increase both the number of sampled clients and the amount of data from each client. It justifies our approach of sampling a subset of clients to represent the overall distribution and collecting data from each selected client to estimate local distributions. 

\begin{theorem}\label{thm:itd1}
    Let $p \in [1, \infty)$, $\mathcal{X}$ and $\mathcal{Y}$ be Polish spaces, $\lambda \in \mathcal{P}_p(\mathcal{X})$, and $P, Q \in \mathcal{Q}_p(\mathcal{X}, \mathcal{Y})$ be Lipschitz continuous kernels with Lipschitz constants $L_P$ and $L_{{Q}}$. Suppose $(\lambda_k)_{k \geq 1} \subset \mathcal{P}_p(\mathcal{X})$, $(P_m)_{m \geq 1} ,  (Q_n)_{n \geq 1} \subset \mathcal{Q}_p(\mathcal{X}, \mathcal{Y})$ are Lipschitz continuous sequences such that:
\begin{enumerate}
\item $\lambda_k \xrightarrow{} \lambda$ converges weakly in $\mathcal{P}_p(\mathcal{X})$,
\item $P_m \xrightarrow{} P$ converges weakly in $\mathcal{Q}_p(\mathcal{X}, \mathcal{Y})$,
\item $Q_n \xrightarrow{} Q$ converges weakly in $\mathcal{Q}_p(\mathcal{X}, \mathcal{Y})$.
\end{enumerate}
Then,

$$
\lim_{k,m,n \to \infty} \mathcal{W}_p^{\lambda_k}(P_m, Q_n) = \mathcal{W}_p^\lambda(P, Q).
$$

\end{theorem}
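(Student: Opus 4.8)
The plan is to reduce everything to the triangle inequality for the metric $\mathcal{W}_p^\mu(\cdot,\cdot)$ (Theorem~\ref{TS_T}, applied with each marginal $\mu=\lambda_k$ and $\mu=\lambda$), supplemented by two convergence facts: a \emph{marginal}-perturbation estimate handled by the weak convergence of $\lambda_k$, and a \emph{kernel}-perturbation estimate handled by the (equi-)Lipschitz structure. First I would fix reference points $x_0\in\mathcal{X}$, $y_0\in\mathcal{Y}$ and record
\[
\big|\mathcal{W}_p^{\lambda_k}(P_m,Q_n)-\mathcal{W}_p^{\lambda}(P,Q)\big|\le \mathcal{W}_p^{\lambda_k}(P_m,P)+\mathcal{W}_p^{\lambda_k}(Q_n,Q)+\big|\mathcal{W}_p^{\lambda_k}(P,Q)-\mathcal{W}_p^{\lambda}(P,Q)\big|,
\]
so it suffices to show that the three terms on the right tend to $0$ — the first two uniformly in $k$ as $m,n\to\infty$, the last as $k\to\infty$ — and then assemble them by an $\varepsilon/3$ argument.

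For the last term, set $g(x):=\big[W_p(P(\cdot\,|x),Q(\cdot\,|x))\big]^p$. Using the triangle inequality for $W_p$ together with the Lipschitz bounds on $P$ and $Q$, the map $x\mapsto W_p(P(\cdot\,|x),Q(\cdot\,|x))$ is Lipschitz with constant $L_P+L_Q$, so $g$ is continuous; and splitting through $\delta_{y_0}$ and invoking the moment bound in Definition~\ref{d:kernel-space} gives $g(x)\le c\,(1+d(x,x_0)^p)$ for a constant $c$. Since $\lambda_k\to\lambda$ weakly in $\mathcal{P}_p(\mathcal{X})$ — i.e.\ $W_p(\lambda_k,\lambda)\to 0$, which in particular makes $\{d(\cdot,x_0)^p\}$ uniformly $\lambda_k$-integrable — the standard characterization of $p$-Wasserstein convergence (\cite{villani2009optimal}) yields $\int_{\mathcal{X}}g\,d\lambda_k\to\int_{\mathcal{X}}g\,d\lambda$, hence $\mathcal{W}_p^{\lambda_k}(P,Q)\to\mathcal{W}_p^{\lambda}(P,Q)$ by continuity of $t\mapsto t^{1/p}$.

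For the kernel-perturbation terms it is enough to treat $\mathcal{W}_p^{\lambda_k}(P_m,P)^p=\int_{\mathcal{X}}\phi_m\,d\lambda_k$ with $\phi_m(x):=\big[W_p(P_m(\cdot\,|x),P(\cdot\,|x))\big]^p$. Because $(P_m)$ is (uniformly) Lipschitz and lies in $\mathcal{Q}_p$ with a common moment constant, $\{\phi_m\}$ is equicontinuous and uniformly dominated: $\phi_m(x)\le c'\,(1+d(x,x_0)^p)$. Given $\varepsilon>0$, tightness and uniform $p$-integrability of $\{\lambda_k\}\cup\{\lambda\}$ (again a consequence of $W_p(\lambda_k,\lambda)\to0$) furnish a compact $K\subset\mathcal{X}$ with $\sup_k\int_{\mathcal{X}\setminus K}(1+d(x,x_0)^p)\,d\lambda_k\le\varepsilon$. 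On $K$, weak convergence $P_m\to P$ gives $\phi_m\to0$ pointwise, and equicontinuity plus compactness upgrade this to $\sup_{x\in K}\phi_m(x)\to0$ (Arzel\`a--Ascoli/Dini), so $\int_K\phi_m\,d\lambda_k\le\sup_{x\in K}\phi_m(x)\to0$ uniformly in $k$; off $K$, $\int_{\mathcal{X}\setminus K}\phi_m\,d\lambda_k\le c'\varepsilon$. Letting $m\to\infty$ and then $\varepsilon\to0$ gives $\limsup_m\sup_k\mathcal{W}_p^{\lambda_k}(P_m,P)=0$, and the $Q_n$ term is identical. Combining with the second paragraph finishes the proof.

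I expect the genuine obstacle to be precisely the kernel-perturbation terms $\mathcal{W}_p^{\lambda_k}(P_m,P)$ and $\mathcal{W}_p^{\lambda_k}(Q_n,Q)$: here the moving marginal $\lambda_k$ interacts with the moving kernel $P_m$, so one cannot simply quote ``$\mathcal{W}_p^{\lambda}(P_m,P)\to0$''. The resolution hinges on the equi-Lipschitz hypothesis on $(P_m)$ — which makes $\{\phi_m\}$ equicontinuous, so that pointwise convergence becomes uniform convergence on the compacta produced by tightness of $\{\lambda_k\}$ — together with the uniform $p$-integrability that $W_p$-convergence of $\lambda_k$ supplies for the tail estimate. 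A secondary point to pin down is exactly what ``$P_m\to P$ weakly in $\mathcal{Q}_p(\mathcal{X},\mathcal{Y})$'' delivers pointwise; under the equi-Lipschitz assumption it yields $W_p(P_m(\cdot\,|x),P(\cdot\,|x))\to0$ for every $x$, which is what the argument on $K$ uses.
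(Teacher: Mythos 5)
Your proof is correct, and it takes a genuinely different (and in one respect stronger) route than the paper's. The paper argues via iterated limits: for each fixed $k$ it invokes the weak-convergence/metrization property of the ITD from \cite{lin2023integrated} to get $\mathcal{W}_p^{\lambda_k}(P_m,Q_n)\to\mathcal{W}_p^{\lambda_k}(P,Q)$ as $m,n\to\infty$; it then shows, exactly as you do, that $x\mapsto W_p(P(\cdot|x),Q(\cdot|x))$ is $(L_P+L_Q)$-Lipschitz with $p$-growth, so $\lambda_k\to\lambda$ in $\mathcal{P}_p$ gives $\mathcal{W}_p^{\lambda_k}(P,Q)\to\mathcal{W}_p^{\lambda}(P,Q)$; finally it passes from the two iterated limits to the joint limit, using the Lipschitz continuity of the sequences $(P_m)$, $(Q_n)$ to justify the interchange. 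That last step is where the paper is thinnest: existence and equality of both iterated limits does not by itself yield the joint limit, and the needed uniformity is left implicit. Your three-term triangle-inequality decomposition attacks exactly this point, establishing $\sup_k\mathcal{W}_p^{\lambda_k}(P_m,P)\to 0$ via equicontinuity of $\phi_m$ on the compacta supplied by tightness of $\{\lambda_k\}$, plus the uniform $p$-integrability coming from $W_p(\lambda_k,\lambda)\to 0$ for the tails; this is a genuine strengthening of the paper's argument. Two caveats you already flag, and which affect the paper's proof equally: the argument needs the Lipschitz constants and the moment constants of $(P_m)$, $(Q_n)$ to be uniform in $m,n$ (the theorem statement only says ``Lipschitz continuous sequences''), and the precise pointwise content of ``$P_m\to P$ weakly in $\mathcal{Q}_p(\mathcal{X},\mathcal{Y})$'' has to be pinned down before one can assert $\phi_m(x)\to 0$ for every $x$ in the relevant compact set rather than merely $\lambda$-a.e.; under equi-Lipschitzness the a.e.\ statement upgrades to everywhere on $\operatorname{supp}\lambda$, which suffices.
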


The following theorem extends the previous result by establishing the asymptotic behavior of the empirical ITD estimator. It provides a Central Limit Theorem for the ITD, demonstrating that the difference between the empirical ITD (based on a finite sample of clients) and the true ITD converges in distribution to a Gaussian process as the number of sampled clients increases. This result is important for constructing confidence intervals and performing hypothesis tests using the ITD in distributed learning settings, allowing for statistical inference about distributional differences across the network of clients.

\begin{theorem}\label{thm:process}
    Let $\mathcal{X}$ and $\mathcal{Y}$ be Polish spaces, $p \geq 2$, and $\lambda \in \mathcal{P}_p(\mathcal{X})$. Suppose $P, Q \in \mathcal{Q}_p(\mathcal{X}, \mathcal{Y})$ are Lipschitz continuous kernels with Lipschitz constants $L_P$ and $L_{{Q}}$, respectively and the Wasserstein distance $W_p(\cdot, \cdot)$ between any two measures in the space $\mathcal{Q}_p(\mathcal{X}, \mathcal{Y})$ is bounded by a constant $R$. Let $x_1, \ldots, x_K$ be an independent and identically distributed (i.i.d.) sample from $\lambda$, and let $\lambda_{K}$ be the $K$th empirical measure. Define the empirical integrated transportation distance as
$$
\Wc_p^{p, \lambda_{K}}(P, {Q}) = \frac{1}{K} \sum_{k=1}^K \big[W_p(P(\cdot|x_k), {Q}(\cdot|x_k))\big]^p.
$$
Then, as $K \rightarrow \infty$, we have
$$
\sqrt{K}\big[\Wc_p^{p, \lambda_{K}}(P, {Q}) - \Wc_p^{p, \lambda}(P, {Q})\big] \xrightarrow{\mathcal{D}} B,
$$
where $B$ is a zero-mean Gaussian process with variance function
$$
\mathbf{var}(B) = \int_\mathcal{X} \big[W_p^p(P(\cdot|x), {Q}(\cdot|x)) - \Wc_p^{p, \lambda}(P, {Q})\big]^2 \lambda(\D x).
$$
\end{theorem}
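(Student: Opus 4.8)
The plan is to reduce the statement to a textbook central limit theorem for an average of i.i.d.\ bounded scalars. Define $\varphi(x) = \big[W_p\big(P(\,\cdot\,|\,x),\, Q(\,\cdot\,|\,x)\big)\big]^p$ for $x\in\mathcal{X}$, and set $Z_k = \varphi(x_k)$, $k=1,\dots,K$. Since the $x_k$ are i.i.d.\ with law $\lambda$, the $Z_k$ are i.i.d.; since $x\mapsto W_p(P(\cdot|x),Q(\cdot|x))$ is Borel measurable — this is part of the standing hypotheses that make $\Wc_p^\lambda$ well defined in Definition \ref{d:kernel-distance} — each $Z_k$ is a genuine random variable. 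By assumption the Wasserstein distance between any two of the measures involved is bounded by $R$, so $0\le Z_k\le R^p$ a.s., whence $\mathbb{E}[Z_1]<\infty$ and $\mathrm{Var}(Z_1)\le R^{2p}/4<\infty$. Finally, by the definition of the integrated transportation distance and of the $K$th empirical measure,
\[
\mathbb{E}[Z_1] = \int_{\mathcal{X}} \big[W_p(P(\cdot|x),Q(\cdot|x))\big]^p\,\lambda(\D x) = \Wc_p^{p,\lambda}(P,Q), \qquad \frac1K\sum_{k=1}^K Z_k = \Wc_p^{p,\lambda_K}(P,Q),
\]
so the bracketed quantity in the claim is exactly the centered sample mean of the $Z_k$.

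With this identification the conclusion is immediate from the Lindeberg--L\'evy CLT: $\sqrt{K}\big(\frac1K\sum_k Z_k - \mathbb{E}[Z_1]\big)$ converges in distribution to a centered Gaussian with variance $\mathrm{Var}(Z_1)$, and expanding,
\[
\mathrm{Var}(Z_1) = \int_{\mathcal{X}} \big[W_p^p(P(\cdot|x),Q(\cdot|x)) - \Wc_p^{p,\lambda}(P,Q)\big]^2\,\lambda(\D x) = \mathbf{var}(B),
\]
which is the stated limit law. If one wishes to read $B$ as a genuine Gaussian \emph{process} indexed by a family of kernel pairs $(P,Q)$ rather than a single pair, the same computation yields all finite-dimensional distributions through the multivariate CLT, and one additionally needs asymptotic tightness; here the Lipschitz hypotheses on $P$ and $Q$ do the job, since — as in the argument behind Theorem \ref{thm:itd1} — $\varphi$ is then Lipschitz in $x$ with constant at most $L_P + L_Q$ and uniformly bounded by $R^p$, so the relevant class has finite bracketing entropy and a Donsker-type theorem applies.

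I do not expect a substantive obstacle in the scalar form stated. The only two points needing care are the measurability of $x\mapsto W_p(P(\cdot|x),Q(\cdot|x))$, inherited from the measurability built into the kernel space $\Qc_p(\mathcal{X},\mathcal{Y})$, and the square-integrability required by the CLT, handed to us for free by the uniform bound $R$ (so the hypothesis $p\ge 2$ is not actually needed here and would matter only for the full empirical-process version or for quantitative variance estimates in the spirit of \cite{del2019central}). The heaviest lifting — should the functional version be wanted — is the entropy/tightness estimate, which is where the kernel Lipschitz constants and the bound $R$ earn their keep.
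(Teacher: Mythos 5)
Your proposal is correct, and it is leaner than the paper's own argument. The paper proves this statement by setting $f(x)=W_p^p(P(\cdot|x),Q(\cdot|x))$, showing $f$ is Lipschitz with constant $pR^{p-1}(L_P+L_Q)$ (via the triangle inequality for $W_p$ and the bound $R$), deducing a bracketing-entropy bound $\log N(\epsilon,\mathcal{F},L_2)\le C/\epsilon$ for the associated function class, and then invoking the Donsker theorem for empirical processes before expanding the variance exactly as you do. You instead observe that, for a fixed pair $(P,Q)$, the quantity $\sqrt{K}\bigl[\mathcal{W}_p^{p,\lambda_K}(P,Q)-\mathcal{W}_p^{p,\lambda}(P,Q)\bigr]$ is just the normalized centered sample mean of the i.i.d.\ bounded scalars $Z_k=f(x_k)$, so the Lindeberg--L\'evy CLT gives the stated limit with no appeal to empirical-process theory; measurability comes from the kernel-space definition and square-integrability from the bound $R$. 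This is a genuine simplification: it makes explicit that the Lipschitz hypotheses, the entropy estimate, and even $p\ge 2$ are not needed for the theorem as literally stated, and that they earn their keep only in a uniform (process-indexed) version --- which is precisely the version the paper's Donsker argument is implicitly aiming at, and which your closing remarks reconstruct. The only caveat is terminological: the theorem calls the limit $B$ a ``Gaussian process,'' and under your reading it is a single Gaussian random variable; if the authors intend a process indexed by a family of kernel pairs, your sketch of the tightness step via the Lipschitz bound and bracketing entropy is the right completion and matches the paper's route.
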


\section{Two sample test for distributed learning}

\subsection{Motivation}

In the context of distributed learning, two-sample tests can be employed to detect changes in the underlying data distribution across different clients. In this scenario, the central server needs to regularly monitor and determine whether the data from each client comes from the same distribution. However, the challenges of detecting these changes are amplified due to the distributed nature of the data and the potential for local shifts to occur abruptly for some clients while not occurring for others. When different clients experience data shifts at varying times, efficient and effective two-sample tests in the distributed learning setting become necessary to address these challenges, and the development of mathematical tools and evaluation metrics is necessary to answer this question.

Therefore, the lack of two-sample tests specifically designed for distributed learning settings has motivated us to explore the Integrated Transportation Distance as a potentially ideal tool. ITD is particularly well-suited for this context, as it takes into account the statistical heterogeneity among devices while preserving individual data privacy. Moreover, recent theoretical advancements, such as the establishment of theorems concerning sample complexity and the central limit theorem for the Wasserstein distance and its variations, have inspired us to investigate and develop a deeper understanding of the statistical bounds for ITD. Building upon these insights, we aim to develop robust mechanisms and methodologies for conducting two-sample tests across distributed devices, thereby addressing the unique challenges and opportunities presented by this emerging field of research. 

In a more formal framework, we propose using the $p$-th order of the Integrated Transportation Distance of the kernel $P$ and the kernel $Q$, denoted as $ \Wc_{p}^{p,\lambda}(P, Q)$, as the test statistic and the basis for determining a threshold value. Specifically, if the $p$-ITD exceeds a certain threshold $c$, i.e., $\Wc_{p}^{p,\lambda}(P, Q) > c$, we conclude that there is a significant shift from the kernel $P$ to the kernel $Q$. This approach allows for a rigorous and quantifiable assessment of the dissimilarity between the two kernels, enabling us to make informed decisions regarding the presence of distributional shifts in the context of distributed learning.

In cases where the total number of devices is too large to count explicitly, our approach is to select a sample of $K$ devices.
For $k = 1, \ldots, K$, let $\{\mathbf{x}^k_i, i=1, \ldots, m_k\}$ and $\{\mathbf{y}^k_j, j=1, \ldots, n_k\}$ be two mutually independent random samples drawn from $P^k$ and $Q^k$, with sample sizes $m_k$ and $n_k$, respectively. For convenience, we
assume $m_1=\cdots=m_K=m$ and $n_1=\cdots=n_K=n$.  We then compute an estimate of $\Wc_{p}^{p,\lambda}(P, Q)$, which we denote as ${\Wc}_{p}^{p,\lambda_K}(P_m, Q_n)$ or the empirical integrated transportation distance, where $P_m$ and $Q_n$ represent the empirical distributions from kernels $P$ and $Q$.

\subsection{Empirical Integrated Transportation Distance }

We set $p=2$ for practical situations, which is attractive for theoretical analysis. Denote $P_m^k$ and $Q_n^k$ as the empirical distributions as follows
\[
 P^k_m:=\frac{1}{m}\sum_{i=1}^{m}\delta_{{\bf x}^k
_{i}}, \quad  Q^k_n:=\frac{1}{n} \sum_{j=1}^{n} \delta_{{\bf y}^k_{j}}.
\]
Then the second-order empirical ITD is
\begin{equation}\label{eq:ITD}
  \begin{aligned}
  &\widehat{\mathrm{ITD}}^2=\mathcal{W}_2^{2,\lambda_K}(P_m, {Q}_n)=  \sum_{k=1}^K w_k\big[{W}_{2}(P^k_m, Q^k_n)^2\big],
    \end{aligned}
\end{equation}
and the expectation of empirical ITD is
\begin{equation*}
  \begin{aligned}
 &\mathbf{E}(\widehat{\mathrm{ITD}}^2)= \sum_{k=1}^K w_k\mathbf{E}\big[{W}_{2}(P^k_m, Q^k_n)^2\big],
  \end{aligned}
\end{equation*}
where the weight $\omega_k$ can be determined in various ways, depending on the specific scenario. In this paper, we consider the weight $\omega_k$ to be related to the ratio of sample sizes, with $\omega_k=\frac{m_k}{2M}+\frac{n_k}{2N}$, where $M=\sum_{k=1}^K m_k$ and $N=\sum_{k=1}^K n_k$ represent the total sample sizes.  In the case of equal sample sizes, the weight simplifies to $\omega_k=\frac{1}{K}$, for all $k=1,\ldots,K$.

Next, leveraging McDiarmid’s inequality, we provide a concentration result for the large deviation bound of the empirical ITD. 
\begin{theorem}\label{thm:concen}
For each client, assume $P^k$ and $Q^k$ have densities and second moments, and let $\omega_k=\frac{1}{K},k=1,\ldots,K$. Suppose $D_x=\max_{k=1,\ldots,K }\{\|{\bf x}^k\|^2,{\bf x}^k\sim P^k\}$ and $D_y=\max_{k=1,\ldots,K }\{\|{\bf y}^k\|^2,{\bf y}^k\sim Q^k\}$ are finite, for all $t\in \mathbb{R}$, then
 \begin{equation*}
 \begin{aligned}
      &\mathbf{P}\left(\widehat{\mathrm{ITD}}^2-\mathbf{E}\left(\widehat{\mathrm{ITD}}^2\right) >t\right)\leq \exp\left(-\frac{Kmnt^2}{{2(m+n)(D_x+D_y)^2}}\right).
  \end{aligned}
 \end{equation*}  
\end{theorem}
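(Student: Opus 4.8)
The plan is to apply McDiarmid's bounded-differences inequality, viewing $\widehat{\mathrm{ITD}}^2=\frac1K\sum_{k=1}^K W_2(P^k_m,Q^k_n)^2$ as a measurable function of the $K(m+n)$ mutually independent sample points $\mathbf{x}^k_i$ ($1\le k\le K$, $1\le i\le m$) and $\mathbf{y}^k_j$ ($1\le k\le K$, $1\le j\le n$), independent both within and across clients. First I would record that, because $\|\mathbf{x}^k_i\|^2\le D_x$ and $\|\mathbf{y}^k_j\|^2\le D_y$ almost surely, every summand obeys $W_2(P^k_m,Q^k_n)^2\le(\sqrt{D_x}+\sqrt{D_y})^2\le 2(D_x+D_y)<\infty$, so $\widehat{\mathrm{ITD}}^2$ is bounded, $\mathbf{E}(\widehat{\mathrm{ITD}}^2)$ is finite, and McDiarmid is applicable; the density hypothesis is not needed for this step, only the almost-sure support bounds.

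The key step is the per-coordinate bounded-differences estimate for $W_2^2$ of empirical measures. Fix a client $k$ and an index $\ell\in\{1,\dots,m\}$, and replace $\mathbf{x}^k_\ell$ by an arbitrary admissible point $\bar{\mathbf{x}}$ with $\|\bar{\mathbf{x}}\|^2\le D_x$, producing a perturbed empirical measure $\widetilde P^k_m$; only the $k$-th term of the average changes. I would take an optimal $W_2$-coupling $\pi=(\pi_{ij})$ of $(P^k_m,Q^k_n)$, with row sums $1/m$ and column sums $1/n$, and use that $\pi$ is still a feasible coupling of $(\widetilde P^k_m,Q^k_n)$; this gives
\[
W_2(\widetilde P^k_m,Q^k_n)^2-W_2(P^k_m,Q^k_n)^2\le\sum_j\pi_{\ell j}\big(\|\bar{\mathbf{x}}-\mathbf{y}^k_j\|^2-\|\mathbf{x}^k_\ell-\mathbf{y}^k_j\|^2\big)\le\frac{2(D_x+D_y)}{m},
\]
since $\sum_j\pi_{\ell j}=1/m$ and every squared distance appearing lies in $[0,2(D_x+D_y)]$. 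Re-running the argument with an optimal coupling of the perturbed pair yields the reverse inequality, so changing one $\mathbf{x}^k_\ell$ alters $W_2(P^k_m,Q^k_n)^2$ by at most $2(D_x+D_y)/m$, hence alters $\widehat{\mathrm{ITD}}^2$ by at most $\frac{2(D_x+D_y)}{Km}$. By the symmetric argument in the $y$-sample, changing one $\mathbf{y}^k_j$ alters $\widehat{\mathrm{ITD}}^2$ by at most $\frac{2(D_x+D_y)}{Kn}$.

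Assembling these constants, the sum of squared bounded-difference increments is
\[
\sum_i c_i^2=(Km)\Big(\tfrac{2(D_x+D_y)}{Km}\Big)^2+(Kn)\Big(\tfrac{2(D_x+D_y)}{Kn}\Big)^2=\frac{4(D_x+D_y)^2(m+n)}{Kmn},
\]
so McDiarmid's inequality gives, for $t>0$,
\[
\mathbf{P}\big(\widehat{\mathrm{ITD}}^2-\mathbf{E}(\widehat{\mathrm{ITD}}^2)>t\big)\le\exp\!\Big(-\tfrac{2t^2}{\sum_i c_i^2}\Big)=\exp\!\Big(-\tfrac{Kmnt^2}{2(m+n)(D_x+D_y)^2}\Big),
\]
which is the stated bound.

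The main obstacle is obtaining the per-coordinate fluctuation at the scale $1/m$ (resp.\ $1/n$) rather than $1/\sqrt{m}$: a naive route that bounds $W_2(P^k_m,\widetilde P^k_m)$ directly and then invokes $|a^2-b^2|=|a-b|(a+b)$ only produces an $O(1/\sqrt m)$ constant, which would yield the wrong exponent. The device of freezing the optimal transport plan and paying only on the single reassigned atom of mass $1/m$ is precisely what delivers the correct $1/m$ rate, and hence the exponent with $Kmn/(2(m+n))$; the remainder is bookkeeping. (As usual, the inequality is the relevant one for $t>0$; for $t\le 0$ the right-hand side need not dominate the probability.)
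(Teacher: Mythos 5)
Your proposal is correct and follows essentially the same route as the paper: both apply McDiarmid's bounded-differences inequality and obtain the per-coordinate increment $2(D_x+D_y)/(Km)$ (resp.\ $2(D_x+D_y)/(Kn)$) by freezing the optimal coupling of the unperturbed pair, which remains feasible after one atom is replaced, so only the single row of mass $1/m$ (resp.\ column of mass $1/n$) pays. The resulting sum of squared increments $4(D_x+D_y)^2(m+n)/(Kmn)$ and the final exponent match the paper's computation exactly.
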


\subsection{Critical value and permutation test}

In this distributed two-sample testing procedure, local devices or clients are responsible for computing Wasserstein distances between their sample sets, both for the original data and for multiple permutations. They perform these calculations internally, preserving data privacy. The central (global) device, on the other hand, coordinates the overall process. It selects the participating devices, assigns weights, aggregates the local Wasserstein distances into the Integrated Transportation Distance statistic, generates permuted ITD values, and performs the final statistical inference. This division of labor allows for efficient use of distributed computational resources while maintaining data confidentiality, as only summary statistics are shared with the central device.

Therefore, we propose the permutation test procedure as follows: 

\begin{enumerate}
    \item The global device randomly selects $K$ clients from the total number of devices. For each device $k = 1, \ldots, K$, we assign a weight $w_k$ such that $\sum_{k=1}^K w_k= 1$.
     \item For each local device $k$, let ${\mathbf{x}^k_i, i=1, \ldots, m}$ and ${\mathbf{y}^k_j, j=1, \ldots, n}$ be samples with sizes $m$ and $n$, respectively.  We then compute $W_2(P^k_{m},Q^k_{n})$. The central device will compute the test statistics $ \widehat{\mathrm{ITD}}^2$ using \eqref{eq:ITD}. 
    
    \item Let $\mathcal{Z}_{m+n}^k=\left\{\mathbf{z}^k_1, \mathbf{z}^k_2, \ldots, \mathbf{z}^k_{m+n}\right\}=\left\{\mathbf{x}^k_1, \ldots, \mathbf{x}^k_m, \mathbf{y}^k_1, \ldots, \mathbf{y}^k_n\right\}$ denote the pooled samples for device $k = 1, \ldots, K$. Each device randomly permutes the pooled samples to obtain $\left\{\mathbf{z}_1^{\pi,k}, \mathbf{z}_2^{\pi,k}, \ldots, \mathbf{z}_{m+n}^{\pi,k}\right\}$.
    
    \item For each device $k$, select the first $m$ observations from the pooled samples as $\left\{\mathbf{x}_1^{\pi,k}, \ldots, \mathbf{x}_m^{\pi,k}\right\}$, and the rest observations as $\left\{\mathbf{y}_1^{\pi,k}, \ldots, \mathbf{y}_n^{\pi,k}\right\}$.
    
    \item For each device $k$, based on two randomly permuted samples $\left\{\mathbf{x}_1^{\pi,k}, \ldots, \mathbf{x}_m^{\pi,k}\right\}$ and $\left\{\mathbf{y}_1^{\pi,k}, \ldots, \mathbf{y}_n^{\pi,k}\right\}$. Denote the empirical permutation distribution functions for each client  as $P^{\pi,k}_m$ and ${Q}^{\pi,k}_n$. One can calculate the test statistic to obtain $\widehat{T}^{\pi,k}=W^2_2(P^{\pi,k}_{m},Q^{\pi,k}_{n})$.
    \item For each device $k$, repeat steps 3 to 5 for $B_k$ times to obtain $\widehat{T}_{b_k}^{\pi,k}, b_k= 1, \ldots, B_k$ and return the results to the central device.
    \item The central device randomly select one permuted test statistics $\widehat{T}_{b_k}^{\pi,k}$ for each device $k$ and calculate the integrated test statistics as $\widehat{\mathrm{ITD}}^{2,\pi} = \sum_{k=1}^K w_k\widehat{T}_{b_k}^{\pi,k}$. Repeat this process $B$ times to obtain $B$ different values of permuted integrated statistics, denoted as $\widehat{\mathrm{ITD}}_b^{2,\pi}$ for $b = 1, \ldots, B$. 
    
    \item The critical value computed by the global device, is
    \begin{small}
       \begin{equation*}
        c_{1-\alpha}^{m,n,K}:=\min\left\{ z\in \mathbb{R}:\frac{1}{B}\sum_{b=1}^B \mathbbm{1} \left(\widehat{\mathrm{ITD}}_b^{2,\pi}< z\right) \geq 1-\alpha\right\}.
    \end{equation*}   
    \end{small} 


    \item  Reject the null hypothesis if the test statistic satisfies $\widehat{\mathrm{ITD}}^2\geq c_{1-\alpha}^{m,n,K}$; otherwise, fail to reject the null hypothesis.

\end{enumerate}

Since the integrated test statistic $ \widehat{\mathrm{ITD}}^2$ is the weighted average of test statistics for each selected device, if we obtain $B_k$ results for each device $k$, we can find a total of $\prod_{k=1}^K B_k$ permuted integrated test statistics.

\textbf{Time Complexity Analysis:} Each of the $k$th local devices performs $O((B_k + 1) * o(m,n))$ operations, where $o(m,n)$ is the cost of calculating the Wasserstein distance between two samples of sizes $m$ or $n$. In multivariate settings, this cost is of order $O(\max(m,n)^3)$. The global server's operations, including client selection, result aggregation, and critical value computation, have a complexity of $O(K*B + B*log(B))$ or $O(K*B)$ if $B$ is significantly smaller than $K$. Thus, the overall time complexity is $O(K * (B_k + 1) * o(m,n) + K * B + B * \log(B))$. In scenarios where the sample sizes $m$ or $n$ are large, the computational burden on the global server becomes relatively insignificant compared to the local computations, highlighting the efficiency of this distributed approach in handling large-scale datasets while maintaining data privacy.

\subsection{Theoretical properties}
 According to \cite{lehmann1986testing} and \cite{hemerik2018exact}, the permutation test provides finite-sample Type I error control when the null hypothesis is true. Since the observations on each distributed machine are exchangeable under $H_0$, then the Type I error of ITD can be uniformly bounded by the pre-specified level $\alpha$. Next, under the alternative hypothesis, we demonstrate that ITD maintains high power within the framework of federated learning, regardless of the number of clients. This is because, under the alternative hypothesis $H_1$, we can always guarantee that $\widehat{\mathrm{ITD}}^2\geq\delta$, where $\delta$ is a constant. Theorem \ref{thm:alter} is the critical theorem for power analysis of practical validation, its result is derived from the conclusions of Theorem \ref{thm:concen}.
\begin{theorem}\label{thm:alter}
 Under the alternative hypothesis $H_1$, let $\omega_k=\frac{1}{K},k=1,\ldots,K$.  If $P^k$ and $Q^k$  possess densities and have finite fourth moments, whenever $K$ approaches infinity or remains fixed, we have 
\begin{equation*}
 \lim_{m,n\rightarrow \infty}\mathbf{P}\left(\widehat{\mathrm{ITD}}^2\geq c_{1-\alpha}^{m,n,K} \mid H_1\right)= 1. 
\end{equation*}
\end{theorem}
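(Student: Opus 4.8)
The plan is to prove that, under $H_1$, the test statistic $\widehat{\mathrm{ITD}}^2$ stays bounded away from zero with probability tending to one, while the permutation critical value $c_{1-\alpha}^{m,n,K}$ converges to zero in probability; together these force the rejection event $\{\widehat{\mathrm{ITD}}^2\ge c_{1-\alpha}^{m,n,K}\}$ to have probability tending to one. Throughout, $\alpha$ and the numbers of Monte-Carlo permutations are held fixed.

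\emph{Lower bound on the statistic.} Under $H_1$ pick a client $k^\ast$ with $P^{k^\ast}\neq Q^{k^\ast}$ and set $\delta:=W_2^2(P^{k^\ast},Q^{k^\ast})>0$, strictly positive because $W_2$ is a genuine metric on $\mathcal{P}_2$. Since every summand in \eqref{eq:ITD} is nonnegative and $\omega_k=1/K$, one has the pointwise bound $\widehat{\mathrm{ITD}}^2\ge \tfrac1K W_2^2(P_m^{k^\ast},Q_n^{k^\ast})$. Joint convexity of $(\mu,\nu)\mapsto W_2^2(\mu,\nu)$ (see \cite{villani2009optimal}) together with $\mathbf{E}[P_m^{k^\ast}]=P^{k^\ast}$ and $\mathbf{E}[Q_n^{k^\ast}]=Q^{k^\ast}$ yields, via Jensen's inequality, $\mathbf{E}(\widehat{\mathrm{ITD}}^2)\ge \delta/K$ for all $m,n$. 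The bounded-difference argument underlying Theorem~\ref{thm:concen} is two-sided, so it also gives $\mathbf{P}\big(\widehat{\mathrm{ITD}}^2<\mathbf{E}(\widehat{\mathrm{ITD}}^2)-t\big)\le \exp\big(-Kmnt^2/[2(m+n)(D_x+D_y)^2]\big)$; choosing $t=\delta/(2K)$ makes $\mathbf{P}\big(\widehat{\mathrm{ITD}}^2\ge \delta/(2K)\big)\to 1$ as $m,n\to\infty$. (Staying strictly within the stated hypotheses, the same conclusion also follows from the Wasserstein strong law $W_2(P_m^{k^\ast},Q_n^{k^\ast})\to W_2(P^{k^\ast},Q^{k^\ast})$ a.s., which needs only finite second moments.)

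\emph{The critical value vanishes.} Fix a client $k$. By construction the permuted samples $\{\mathbf{x}_i^{\pi,k}\}$ and $\{\mathbf{y}_j^{\pi,k}\}$ are the two complementary blocks of a uniformly random reordering of the pooled sample $\mathcal{Z}_{m+n}^k$, so each is drawn without replacement from the pooled empirical measure $\widehat{\rho}_{m+n}^{k}=\tfrac{m}{m+n}P_m^k+\tfrac{n}{m+n}Q_n^k$. Along any subsequence on which $m/(m+n)$ converges, $\widehat{\rho}_{m+n}^{k}$ converges a.s.\ in $W_2$ to the corresponding mixture of $P^k$ and $Q^k$, and a proportional without-replacement subsample inherits the same limit; hence $P_m^{\pi,k}$ and $Q_n^{\pi,k}$ converge in $W_2$ to the \emph{same} measure, and $\widehat{T}^{\pi,k}=W_2^2(P_m^{\pi,k},Q_n^{\pi,k})\to 0$ in probability. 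Quantitatively, since the two permuted samples share a common population, the del Barrio variance bound \cite{del2019central} (taken with $\mu=\nu$) gives $\mathbf{E}[\widehat{T}^{\pi,k}]\to 0$, uniformly in $k$ under the moment and density hypotheses read uniformly over clients, so $\mathbf{E}[\widehat{\mathrm{ITD}}_b^{2,\pi}]=\tfrac1K\sum_{k=1}^K\mathbf{E}[\widehat{T}_{b_k}^{\pi,k}]\to 0$ and $\widehat{\mathrm{ITD}}_b^{2,\pi}\to 0$ in probability for each $b=1,\dots,B$ by Markov's inequality. Since $B$ is fixed, $\max_{1\le b\le B}\widehat{\mathrm{ITD}}_b^{2,\pi}\to 0$ in probability, and because the empirical $(1-\alpha)$-quantile of a finite collection never exceeds its maximum, $c_{1-\alpha}^{m,n,K}\to 0$ in probability.

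\emph{Conclusion and the role of $K$.} For fixed $K$ the level $\delta/(2K)$ is a strictly positive constant, so the event $\{\widehat{\mathrm{ITD}}^2\ge \delta/(2K)\}\cap\{c_{1-\alpha}^{m,n,K}<\delta/(2K)\}$ has probability tending to one and forces rejection, which proves the claim for every fixed $K$. When $K\to\infty$ the same reasoning carries over provided the aggregated signal $\tfrac1K\sum_{k:\,P^k\neq Q^k}W_2^2(P^k,Q^k)$ stays bounded below (for example, when a fixed fraction of clients differ); otherwise one must send $m,n\to\infty$ fast enough that the $\mathcal{O}(1/K)$ signal dominates the decay rates of both $c_{1-\alpha}^{m,n,K}$ and the centered fluctuations of $\widehat{\mathrm{ITD}}^2$. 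I expect the main obstacle to be the permutation step: rigorously controlling $\widehat{T}^{\pi,k}$ in the multivariate setting — equivalently, showing that complementary without-replacement subsamples of a convergent empirical measure both converge to its limit in $W_2$ with a rate uniform over clients — and, relatedly, ensuring that this uniformity survives letting $K\to\infty$.
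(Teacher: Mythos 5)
Your proposal follows the same two-pronged strategy as the paper's proof --- show that $\widehat{\mathrm{ITD}}^2$ stays bounded away from zero under $H_1$ while the critical value $c_{1-\alpha}^{m,n,K}$ vanishes --- but implements both prongs differently. For the critical value, the paper applies Chebyshev's inequality to the conditional permutation distribution to obtain $c_{1-\alpha}^{m,n,K}\le \mathbf{E}\big(\widehat{\mathrm{ITD}}^{2,\pi}\mid\mathcal{Z}\big)+\sqrt{\mathbf{var}\big(\widehat{\mathrm{ITD}}^{2,\pi}\mid\mathcal{Z}\big)/\alpha}$ and then drives both terms to zero using the Fournier--Guillin convergence rates and a del Barrio--type variance bound (Lemma~\ref{lm:varper}); you instead bound the empirical quantile by the maximum of the $B$ Monte-Carlo replicates and send each to zero. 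Your route is more elementary and transparent, but it leans on $B$ being held fixed, whereas the paper's Chebyshev bound controls the quantile with explicit rates. For the lower bound, your Jensen/joint-convexity argument giving $\mathbf{E}(\widehat{\mathrm{ITD}}^2)\ge\delta/K$ is actually a cleaner justification than the paper's, which merely asserts $\tfrac12\mathbf{E}(\widehat{\mathrm{ITD}}^2\mid H_1)\ge\delta>0$ ``by Theorem~\ref{thm:concen}'' --- a concentration inequality that does not by itself furnish a lower bound on the mean; your two-sided McDiarmid step then plays the role of the paper's Chebyshev-plus-Lemma~\ref{lm:var} step, and both work. Finally, the two weak points you flag are real and are present in the paper as well: the permuted samples are complementary without-replacement blocks of the pooled sample, yet Lemma~\ref{lm:varper} silently treats them as i.i.d.\ draws from a common mixture before invoking i.i.d.-based rates; and when $K\to\infty$ with only a vanishing fraction of differing clients, the signal is of order $1/K$, so the paper's concluding display tends to one only under the unstated assumption that $\delta$ is a $K$-independent lower bound. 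Your proposal is, if anything, more candid about these gaps than the paper itself.
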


\section{Numerical Illustrations}

All experiments were conducted on a MacBook Pro equipped with an Apple M2 Pro processor and 16 GB of memory. For simplicity, we assume that every client has equal weight.

\subsection{Simulation}

In this section, we present the numerical performance of the proposed ITD test. We use the permutation test to obtain an empirical distribution and derive the $p$-value of the ITD statistic. The comparisons are based on Type-I error and power. Each experiment involves at most 10 clients ($K = 10$). The weights of all clients are the same. For each client, the sample size is fixed at 250 ($m, n = 250$), and the permutated test statistics are run and computed 100 times ($B_k = 100$). After the results of the permutated test statistics from all clients are sent to the central server, we obtain $B$ instances of the permutated integrated statistics ($B = 1000$). For simplicity, we assume that every client has equal weight. 

For the Type I error and power evaluation, we consider the following three types of distributions. The mean vectors are \((\mu_{x_1}, \ldots, \mu_{x_K})\) and \((\mu_{y_1}, \ldots, \mu_{y_K})\), and the covariance matrices are \(\text{diag}(\sigma_{x_k})_d\) and \(\text{diag}(\sigma_{y_k})_d\) for \(k = 1, \ldots, K\), where \(\text{diag}(\cdot)_d\) denotes diagonal matrices with dimension \(d\). The distributions considered are:
\begin{enumerate}
    \item Distribution 1: Multivariate normal distribution.
    \item Distribution 2: Multivariate log-normal distribution. A multivariate log-normal distribution is defined such that \(\log (X_k) \sim N\left(\mu_{x_k}, \text{diag}(\sigma_{x_k})_d\right)\) and \(\log (Y_k) \sim N\left(\mu_{y_k}, \text{diag}(\sigma_{y_k})_d\right)\) for \(k = 1, \ldots, K\).
    \item Distribution 3: Multivariate t distribution with 5 degrees of freedom.

\end{enumerate}

Additionally, we examine Model A for three types of distributions and report the Type I error. We also evaluate Models B through D for these distributions by varying the location parameter and/or scale parameter for the power evaluation.

\begin{enumerate}[label=\Alph*.]

    \item \(\left\{ \mu_{x_k}=\mu_{y_k} = u_k, \sigma_{x_k} = \sigma_{y_k} = r_k\right\}\) - same mean, and same variance.
    \item \(\left\{ \mu_{x_k}=\mu_{y_k} = u_k, \sigma_{x_k} = r_k, \sigma_{y_k} = r_k + \varepsilon_k \right\}\) - same mean, and different variance.
    \item \(\left\{ \mu_{x_k} = u_k, \mu_{y_k} = u_k + \epsilon_k, \sigma_{x_k} = \sigma_{y_k} = r_k \right\}\) - different mean, and same variance.
    \item \(\left\{\mu_{x_k} = u_k, \mu_{y_k} = u_k + \epsilon_k, \sigma_{x_k} = r_k, \sigma_{y_k} = r_k + \varepsilon_k \right\}\) - different mean, and different variance.
\end{enumerate}

Where \( u_k \) is a \( d \)-dimensional vector randomly generated from a uniform distribution between -1 and 1, and \( r_k \) is a \( d \)-dimensional vector randomly generated from a uniform distribution between 0.8 and 1.2. The terms \(\epsilon\) and \(\varepsilon\) are random errors that are small compared to the mean vector \( u \) and the variance vector \( \sigma \). In this case, both \(\epsilon\) and \(\varepsilon\) are set to be random normal errors with a mean 0 and a standard deviation of 0.25.

The Type I errors and power are reported in the tables for all three distributions, with different numbers of clients and dimensions. The permutation tests were repeated 200 times and the test significance level is set to be 0.05. Table 1 shows that ITD generally controls Type I error well, with most values close to the nominal 0.05 level across different numbers of clients ($K$) and dimensions (d) for all three distributions (Normal, Log-normal, and t-distribution). Tables 2, 3, and 4 demonstrate that ITD has high power in detecting differences between distributions and in detecting changes in both location (mean) and scale (variance) parameters. In addition, there's a general trend of improved power as the number of clients ($K$) increases, especially for the Log-normal distribution, supporting the idea that ITD can leverage information from multiple clients effectively.

\begin{table}[h]
 \caption{Performance of Type I error in Model A}
  \centering
  \begin{tabular}{ccccc}
    \toprule
    
    $K$     & $d$     & Normal & Log-normal & $t$-distribution \\
    \midrule
    1 & 2 & 0.060   &0.055 & 0.075 \\
     1 & 5 & 0.055   &0.075 &0.060  \\
     1 & 10 & 0.065   &0.070 &0.090  \\
     2 & 2 & 0.045   &0.060 &0.060  \\
     2 & 5 & 0.045   &0.050 &0.060  \\
     2 & 10 & 0.060   &0.035 &0.035  \\
     5 & 2 & 0.030   &0.065 &0.035 \\
     5 & 5 & 0.045  &0.065 &0.045 \\
     5 & 10 & 0.040   &0.060 &0.045  \\
     10 & 2 & 0.060   &0.050 &0.030  \\
     10 & 5 & 0.030   &0.040 &0.080  \\
     10 & 10 & 0.050   &0.035 &0.030 \\
   
    \bottomrule
  \end{tabular}
  \label{tab:table2}
\end{table}

\begin{table}[h]
 \caption{Performance of Power in Model B}
  \centering
  \begin{tabular}{ccccc}
    \toprule
    
    $K$     & $d$     & Normal & Log-normal & $t$-distribution \\
    \midrule
    1 & 2 & 0.970  &0.545 & 0.740 \\
     1 & 5 & 1   &0.875 &0.920  \\
     1 & 10 & 1  &0.985 &0.995  \\
     2 & 2 & 1 &0.650 &0.990  \\
     2 & 5 & 1 &0.885&0.985  \\
     2 & 10 & 1 &0.980  &1  \\
     5 & 2 & 1 &0.785 &0.985 \\
     5 & 5 & 1 &0.965 &1 \\
     5 & 10 & 1 &0.995  &1  \\
     10 & 2 & 1 &0.745 &0.995  \\
     10 & 5 & 1 &0.985 &1  \\

     10 & 10 & 1 &0.990 &1  \\
   
    \bottomrule
  \end{tabular}
  \label{tab:table2}
\end{table}

\begin{table}[h]
 \caption{Performance of Power in Model C}
  \centering
  \begin{tabular}{ccccc}
    \toprule
    
    $K$     & $d$     & Normal & Log-normal & $t$-distribution \\
    \midrule
    1 & 2 & 1  &0.710 & 0.995 \\
     1 & 5 & 1   &0.940 &1  \\
     1 & 10 & 1  &0.970 &1  \\
     2 & 2 & 1 &0.755 &1  \\
     2 & 5 & 1 &0.965&1  \\
     2 & 10 & 1 &0.995  &1  \\
     5 & 2 & 1 &0.835 &1 \\
     5 & 5 & 1 &0.975 &1 \\
     5 & 10 & 1 &0.990  &1  \\
     10 & 2 & 1 &0.905 &1  \\
     10 & 5 & 1 &0.990 &1  \\

     10 & 10 & 1 &0.990 &1  \\
   
    \bottomrule
  \end{tabular}
  \label{tab:table2}
\end{table}

\begin{table}[h]
 \caption{Performance of Power in Model D}
  \centering
  \begin{tabular}{ccccc}
    \toprule
    
    $K$     & $d$     & Normal & Log-normal & $t$-distribution \\
    \midrule
    1 & 2 & 1  &0.525 & 0.995 \\
     1 & 5 & 1   &0.975 &0.995  \\
     1 & 10 & 1  &0.870 &1  \\
     2 & 2 & 1 &0.585 &1  \\
     2 & 5 & 1 &0.980 &1  \\
     2 & 10 & 1 &0.965  &1  \\
     5 & 2 & 1 &0.885 &0.995 \\
     5 & 5 & 1 &0.950  &1 \\
     5 & 10 & 1 &0.995  &1  \\
     10 & 2 & 1 &0.955 &1  \\
     10 & 5 & 1 &0.985 &1  \\

     10 & 10 & 1 &1 &1  \\
   
    \bottomrule
  \end{tabular}
  \label{tab:table2}
\end{table}

\subsection{ITD for Concept Drift Detection in Distributed Learning}

Data distribution shifts over time, known as concept drift, can significantly impact model performance. In distributed learning environments, identifying such shifts presents unique challenges beyond the typical issues of statistical heterogeneity, communication costs, and privacy concerns. The decentralized nature of distributed learning means that drifts can occur at different times across various participants, making it difficult for a single global model to maintain optimal performance for all nodes.
The complexities of detecting concept drift in distributed learning settings are extensive. Local drifts may occur suddenly for some participants while remaining stable for others. This raises questions about how significant these shifts need to be to warrant updates to an existing production model. Developing robust mathematical tools and evaluation metrics is crucial to address these concerns effectively.

Integrated Transportation Distance offers a promising solution for detecting concept drift in distributed systems. ITD aggregates statistical distances across participants without direct access to local data, balancing statistical heterogeneity and participant importance. This approach enables detection of subtle distribution shifts that might be missed when examining individual clients, potentially improving statistical power.
ITD's interpretability allows easy identification of nodes experiencing larger shifts. Theoretical work suggests accurate ITD approximations are possible without full participation. Compared to other concept drift detection methods in distributed learning, ITD offers advantages in calculation ease and interpretation. We demonstrate ITD's capability to identify local drifts using the MNIST dataset.

Our experiment comprises 10 clients, each representing images of a unique digit from 0 to 9. For each client $k$, we sample two sets of images and compute the test statistic between them: a. $n_k$ = 100 images of the client's specific digit. b. $n_k$ = 100 images with a probability $\varepsilon$ from the client's digit and probability $1-\varepsilon$ from all digits. This sampling method introduces gradual concept drift by incorporating images from other digits. The Integrated Transportation Distance then aggregates information from all clients to detect global drift. We conduct a permutation test and repeat the experiment 200 times to estimate the power of the test. This experimental design allows us to evaluate the effectiveness of ITD in detecting concept drift within a distributed learning framework, while also providing insights into its performance across different drift scenarios and client configurations. In the table below, we observed that through aggregating information across all clients, an integrated test statistic can detect smaller overall shifts in distribution that might not be apparent when looking at individual clients. This leads to improved statistical power, especially when individual clients have limited data.

\begin{table}[H]
 \caption{Performance of Power }
  \centering
  \begin{tabular}{cccccccccccc}
    \toprule
    
    $\varepsilon$ & 0 & 1 & 2 & 3&  4 & 5& 6 & 7 & 8& 9 & ITD \\
    \midrule
    0.8 & 0.740 & 0.955& 0.300  & 0.445& 0.420 & 0.185& 0.620 & 0.550 & 0.205& 0.315 &1 \\

    0.9 & 0.400  & 0.945& 0.045& 0.085& 0.080 & 0.025& 0.215& 0.205& 0.030 & 0.115 &0.635 \\

    \bottomrule
  \end{tabular}
  \label{tab:table2}
\end{table}

\subsection{Approximation Quality of Approximated ITD with Varying Client Participation: A PM2.5 Case Study}

To further explore the performance and characteristics of the Integrated Transportation Distance in the context of distributed learning, we conduct a numerical study using the PM2.5 data\cite{liang2021modeling}. This dataset provides a rich source of spatio-temporal information on air pollution across China, making it ideal for investigating the properties of ITD in a real-world distributed setting. In this study, we focus on two key aspects: 1. The effect of varying $K$, the total number of clients (monitoring stations) selected, on the approximated ITD value. 2. The relationship between the approximated ITD and the true ITD as $K$ changes.

We analyze city-level daily PM2.5 concentration data from two time periods: September 1 to December 31 in both 2015 and 2016, encompassing 313 cities across China. Our study involves two main steps: 1. We conduct a permutation test to evaluate the Integrated Transportation Distance between these two time periods, varying the number of participating cities ($K$). The ITD value computed using all 313 cities is 458.36, and the permutation test statistic at the 95\% significance level is 191.27, indicating a significant difference in air quality between the two time periods. As shown in Table \ref{tab:pm2.5-k}, we observe a general trend of improved statistical power as the number of participating cities increases. 2. We randomly select $K$ cities from the total pool and compute an approximated ITD between the two time periods for these selected cities. We then compare these results to the ITD and permutation test outcomes calculated using all 313 cities. By repeating this process for different values of $K$ and multiple random selections, we observe that the distribution of the approximated ITD converges towards a normal distribution and centers around the value of true ITD, as illustrated in Figures \ref{fig:K50} to \ref{fig:K150} .

\begin{table}[h]
 \caption{Performance of Power with Varying $K$}
  \centering
  \begin{tabular}{ll}
    \toprule
    
    $K$     & power\\
    \midrule
    10 & 0.9 \\
     50 & 1   \\
     100 & 1 \\

    \bottomrule
  \end{tabular}
 \label{tab:pm2.5-k}
\end{table}

\begin{figure}[h]
\centering
\includegraphics[width=0.5\textwidth]{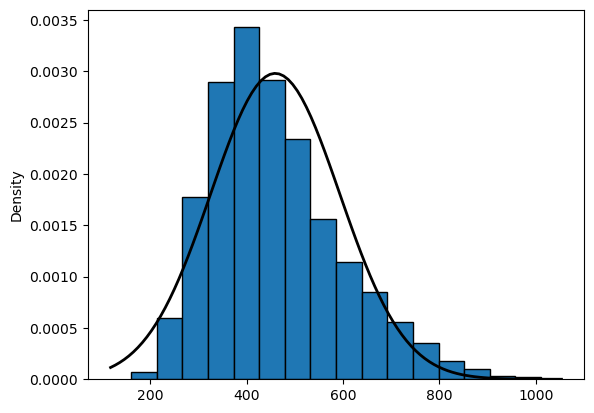}
\caption{Distribution of the Approximate ITD with $K =50$}
\label{fig:K50}
\end{figure}

\begin{figure}[h]
\centering
\includegraphics[width=0.5\textwidth]{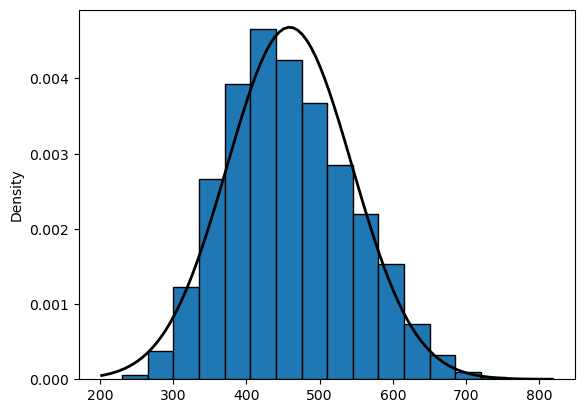}
\caption{Distribution of the Approximate ITD with $K =100$}
\label{fig:K100}
\end{figure}

\begin{figure}[h]
\centering
\includegraphics[width=0.5\textwidth]{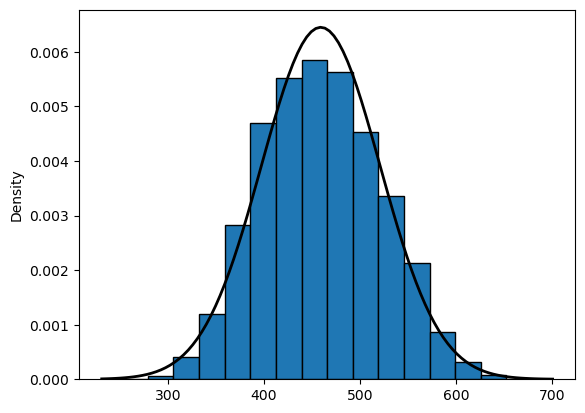}
\caption{Distribution of the Approximate ITD with $K =150$}
\label{fig:K150}
\end{figure}

\section{Conclusion}
This paper presents a novel framework for distributed two-sample testing using the Integrated Transportation Distance. Our approach addresses the unique challenges posed by distributed learning and federated learning environments, including data privacy concerns, statistical heterogeneity, and the need for efficient communication. The Integrated Transportation Distance proves to be a powerful tool for detecting distributional changes across multiple clients without requiring centralized access to raw data.

We have provided a solid theoretical foundation for the use of ITD in distributed settings, including convergence properties and asymptotic behavior. Our permutation test procedure offers a practical implementation strategy that balances computational efficiency with statistical rigor. Through extensive simulations, we demonstrated that our method maintains robust Type I error control and high statistical power across various distributions and dimensions.

Our work contributes significantly to the field of distributed statistical inference, offering researchers and practitioners a powerful tool for two-sample testing in modern, decentralized data environments. The ITD-based framework provides a promising approach to address the growing need for privacy-preserving, statistically sound methods in distributed learning settings.

Future research directions for this work include extending the ITD framework to multi-sample testing scenarios, allowing for comparisons across multiple time periods or data sources in distributed settings. We also aim to extend the ITD framework to incorporate entropic optimal transport, which could potentially improve computational efficiency and provide smoother optimal transport plans in high-dimensional settings. Additionally, adapting the ITD framework to handle data in non-Euclidean spaces, such as graphs or manifolds, would significantly broaden its applicability to diverse data types and problem domains. 



\newpage

\appendix
\section{Supplementary Theorems and Remarks on Optimal Transport Distance}

\begin{remark}
\label{r:discrete}
{\rm
Equation \eqref{Wass} has a convenient linear programming representation for discrete measures.
Let $\mu$ and $\nu$ be discrete measures in $\Pc(\mathcal{X})$, supported at positions $\{x^{(i)}\}_{i=1}^{m}$ and $\{y^{(j)}\}_{j=1}^{n}$ with normalized (totaling 1) positive weight vectors $w_{x}$ and $w_{y}$:
\[
\hat\mu=\sum_{i=1}^{m} w_{x}^{(i)} \delta_{x^{(i)}}, \quad \hat\nu=\sum_{j=1}^{n} w_{y}^{(j)} \delta_{y^{(j)}}.
\]
For $p \geq 1$, let $D \in {R}_{+}^{m \times n}$ be the distance matrix defined as $D_{i j}=d\big(x^{(i)},y^{(j)}\big)^{p}$. Then the $p$th power of the $p$-Wasserstein distance between the measures $\mu$ and $\nu$ is the optimal value of the following transportation problem:
\begin{equation*} \label{12}
\min _{\pi \in {R}_{+}^{m \times n}}\  \sum_{i=1}^m\sum_{j=1}^n  D_{i j} \pi_{i j} \quad
\text {s.t.} \quad  \pi^\top \1_{m}=w_{x}, \quad  \pi \1_{n}=w_{y}.
\end{equation*}

}
\end{remark}

\begin{remark}
A simple application of Hölder's inequality shows that $W_{p} \leq W_{q}$ for any $p \leq q < \infty$.
\end{remark}

We now briefly review the convergence concepts in the Wasserstein space.
The notation $\mu_{k} \rightharpoonup \mu$ means that $\mu_{k}$ converges weakly to $\mu$, i.e. $\int \varphi(x)\; \mu_{k}({\D x}) \rightarrow \int \varphi(x) \;\emph{} \mu({\D x})$ for all bounded continuous functions $\varphi:\Xc \to \Rb$.

\begin{definition}
\label{d:villani-def-1}
    Let $(\mathcal{X}, d)$ be a Polish space, and $p \in[1, \infty)$. Let $\left\{\mu_{k}\right\}_{k \in {N}}$ be a sequence of probability measures in $\Pc_{p}(\mathcal{X})$ and let $\mu$ be another element of $\Pc_{p}(\mathcal{X})$. Then $\left\{\mu_{k}\right\}_{k \in {N}}$ is said to converge weakly in $\Pc_{p}(\mathcal{X})$, written $\mu_k \overset{}\to \mu$,
   if any one of the following equivalent properties is satisfied for some (and then any) $x_{0} \in \mathcal{X}$:
\begin{enumerate}

\item  $\mu_{k} \rightharpoonup \mu$ and $\int d\left(x_{0}, x\right)^{p}\,  \mu_{k}({\D x}) \longrightarrow \int d\left(x_{0}, x\right)^{p} \, \mu({\D x})$,
\item $\mu_{k} \rightharpoonup \mu$ and $\lim _{R \rightarrow \infty} \limsup _{k \rightarrow \infty} \int_{d\left(x_{0}, x\right) \geq R} d\left(x_{0}, x\right)^{p} \, \mu_{k}({\D x})=0$,
\item For all continuous functions $\varphi$ with $|\varphi(x)| \leq 1+d\left(x_{0}, x\right)^{p}$ one has
\begin{equation*}
    \int \varphi(x) \; \mu_{k}({\D x}) \longrightarrow \int \varphi(x) \; \mu({\D x}).
\end{equation*}
    \end{enumerate}
\end{definition}

The fundamental property of the OT distance $W_p(\cdot,\cdot)$ is that it metricizes the topology of weak
convergence in $\Pc_p(\Xc)$.
\begin{theorem}
\label{t:metrization}
Let $\Xc$  be a Polish space, $p\in [1,\infty)$; then the following
two statements are equivalent:
\begin{enumerate}

\item $\mu_k \overset{}\to \mu$,
\item $W_p(\mu_k ,\mu) \to 0$.
\end{enumerate}
Furthermore, the Wasserstein space $\Pc_{p}(\mathcal{X})$, metrized by the Wasserstein distance $W_{p}$, is also a Polish space.
\end{theorem}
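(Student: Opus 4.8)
The plan is to split the argument into the equivalence $(1)\Leftrightarrow(2)$ and, separately, the Polishness of $(\Pc_p(\Xc),W_p)$. Fix once and for all a base point $x_0\in\Xc$, and recall from Definition~\ref{d:villani-def-1} that $\mu_k\to\mu$ in $\Pc_p(\Xc)$ means $\mu_k\rightharpoonup\mu$ narrowly \emph{and} $\int_\Xc d(x_0,x)^p\,\mu_k(\D x)\to\int_\Xc d(x_0,x)^p\,\mu(\D x)$. For $(2)\Rightarrow(1)$: since $W_1\le W_p$ by H\"older's inequality and the Kantorovich--Rubinstein functional $W_1$ dominates the bounded-Lipschitz metric, which metrizes narrow convergence on a Polish space, $W_p(\mu_k,\mu)\to0$ already forces $\mu_k\rightharpoonup\mu$. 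For the moments, pick an optimal coupling $\pi_k$ of $\mu_k$ and $\mu$ and combine the triangle inequality in $\Xc$ with the triangle inequality in $L^p(\pi_k)$ to obtain
\[
\Big|\,\big\|d(x_0,\cdot)\big\|_{L^p(\mu_k)}-\big\|d(x_0,\cdot)\big\|_{L^p(\mu)}\,\Big|\le\Big(\int_{\Xc\times\Xc}d(x,y)^p\,\pi_k(\D x,\D y)\Big)^{1/p}=W_p(\mu_k,\mu)\longrightarrow 0,
\]
so the $p$-th moments converge and $(1)$ holds.

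For $(1)\Rightarrow(2)$ I would argue via Skorokhod's representation theorem: since $\Xc$ is Polish and $\mu_k\rightharpoonup\mu$, there are random elements $X_k\sim\mu_k$ and $X\sim\mu$ on a common probability space with $X_k\to X$ almost surely, hence $d(X_k,X)^p\to 0$ a.s. Because $W_p^p(\mu_k,\mu)\le\Eb[d(X_k,X)^p]$, it is enough to upgrade this to $L^1$ convergence. This is where hypothesis $(1)$ is used: $d(X_k,x_0)^p\to d(X,x_0)^p$ a.s.\ while $\Eb[d(X_k,x_0)^p]\to\Eb[d(X,x_0)^p]<\infty$, and almost sure convergence together with convergence of the integrals implies $\{d(X_k,x_0)^p\}_k$ is uniformly integrable. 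Since $d(X_k,X)^p\le 2^{p-1}\big(d(X_k,x_0)^p+d(x_0,X)^p\big)$, the family $\{d(X_k,X)^p\}_k$ is uniformly integrable as well, and the Vitali convergence theorem yields $\Eb[d(X_k,X)^p]\to0$, hence $W_p(\mu_k,\mu)\to0$.

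For Polishness, separability follows by exhibiting the countable family of measures $\sum_{i=1}^n q_i\delta_{a_i}$ with $n\in\mathbb{N}$, rationals $q_i\ge0$ summing to $1$, and $a_i$ in a fixed countable dense subset of $\Xc$: any $\mu\in\Pc_p(\Xc)$ is approximated in $W_p$ by first discarding the mass outside a compact set on which the tail $p$-moment is small (tightness of a single measure), then partitioning that compact set into finitely many pieces of small diameter and moving each piece's mass to a nearby rational atom. For completeness, let $(\mu_k)$ be $W_p$-Cauchy. The $L^p(\pi_k)$ estimate above shows the $p$-th moments are bounded (and, with a little more care, uniformly integrable over $k$), while comparing $\mu_k$ for $k\ge N$ with a fixed nearby $\mu_N$ through an optimal coupling and Markov's inequality shows $(\mu_k)$ is tight; Prokhorov then extracts $\mu_{k_j}\rightharpoonup\mu$, uniform integrability of the $p$-th moments upgrades this to $\mu_{k_j}\to\mu$ in $\Pc_p(\Xc)$, so $W_p(\mu_{k_j},\mu)\to0$ by $(1)\Rightarrow(2)$, and then $W_p(\mu_k,\mu)\le W_p(\mu_k,\mu_{k_j})+W_p(\mu_{k_j},\mu)$ together with the Cauchy property gives $W_p(\mu_k,\mu)\to 0$; in particular $\mu\in\Pc_p(\Xc)$ because its $p$-th moment is finite.

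I expect the main obstacle to be the measure-theoretic bookkeeping in two spots: the uniform-integrability step in $(1)\Rightarrow(2)$, which rests on the converse-Vitali/Scheff\'e-type fact that a.s.\ convergence plus convergence of expectations forces uniform integrability; and, in the completeness proof, converting the $W_p$-Cauchy property into simultaneous tightness and uniform integrability of the $p$-th moments of all $\mu_k$. By contrast, $(2)\Rightarrow(1)$ and the separability construction are comparatively routine.
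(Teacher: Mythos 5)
Your argument is correct, but there is nothing in the paper to compare it against: Theorem~\ref{t:metrization} is stated in the appendix as a classical fact, imported without proof from Villani's monograph \cite{villani2009optimal} (Theorems 6.9 and 6.18 there). Measured against that reference, your route through Skorokhod's representation theorem for $(1)\Rightarrow(2)$ is a genuine and legitimate alternative to Villani's truncation/Ces\`aro-type estimate: the key chain --- a.s.\ convergence of $d(X_k,X)^p$, the Scheff\'e-type converse (a.s.\ convergence of the nonnegative variables $d(X_k,x_0)^p$ plus convergence of their means forces uniform integrability), the domination $d(X_k,X)^p\le 2^{p-1}\bigl(d(X_k,x_0)^p+d(x_0,X)^p\bigr)$, and Vitali --- is sound, and it buys a shorter, more probabilistic proof at the cost of invoking Skorokhod representation (which requires $\Xc$ Polish, as is assumed). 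The $(2)\Rightarrow(1)$ direction via $W_1\le W_p$, the bounded-Lipschitz metric, and the Minkowski estimate $\bigl|\|d(x_0,\cdot)\|_{L^p(\mu_k)}-\|d(x_0,\cdot)\|_{L^p(\mu)}\bigr|\le W_p(\mu_k,\mu)$ is exactly the standard argument. The Polishness sketch is also the standard one; the only step you correctly flag as needing care --- extracting tightness and uniform integrability of the $p$-th moments from the $W_p$-Cauchy property --- is where the real work lies, and your plan (compare each $\mu_k$ to a fixed nearby $\mu_N$ via an optimal coupling and Markov's inequality) is the right mechanism. No gaps.
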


By the triangle inequality, we obtain the corollary.
\begin{corollary} \label{C_W}
If $(\mathcal{X}, d)$ is a Polish space, and $p \in[1, \infty)$, then $W_{p}(\cdot,\cdot)$ is continuous on $\Pc_{p}(\mathcal{X})$. More explicitly, if $\mu_{k}$ (resp. $\nu_{k}$) converges to $\mu$ (resp. $\nu$) weakly in $P_{p}(\mathcal{X})$ as $k \rightarrow \infty$, then
\begin{equation*}
W_{p}\left(\mu_{k}, \nu_{k}\right) \longrightarrow W_{p}(\mu, \nu) .
\end{equation*}
\end{corollary}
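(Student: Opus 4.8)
The plan is to deduce the statement directly from two facts already in hand: that $W_p(\cdot,\cdot)$ is a genuine, finite-valued metric on $\Pc_p(\Xc)$ (recorded immediately after Definition \ref{Wass_d1}), and that weak convergence in $\Pc_p(\Xc)$ is equivalent to convergence in the $W_p$ metric, which is Theorem \ref{t:metrization}. Continuity of a metric with respect to its own topology is a general phenomenon; the explicit sequential statement in the corollary is just its concrete form, so the argument is a short one.

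First I would apply the triangle inequality twice. From
\[
W_p(\mu_k,\nu_k) \le W_p(\mu_k,\mu) + W_p(\mu,\nu) + W_p(\nu,\nu_k)
\]
and the symmetric bound $W_p(\mu,\nu) \le W_p(\mu,\mu_k) + W_p(\mu_k,\nu_k) + W_p(\nu_k,\nu)$, together with the finiteness of all six quantities on $\Pc_p(\Xc)$ (which makes the rearrangements legitimate), I obtain the two-sided estimate
\[
\bigl| W_p(\mu_k,\nu_k) - W_p(\mu,\nu) \bigr| \le W_p(\mu_k,\mu) + W_p(\nu_k,\nu).
\]

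Next, since $\mu_k \to \mu$ and $\nu_k \to \nu$ weakly in $\Pc_p(\Xc)$, Theorem \ref{t:metrization} yields $W_p(\mu_k,\mu) \to 0$ and $W_p(\nu_k,\nu) \to 0$ as $k \to \infty$. Substituting into the displayed bound gives $W_p(\mu_k,\nu_k) \to W_p(\mu,\nu)$, which is the assertion. The unqualified claim that $(\mu,\nu)\mapsto W_p(\mu,\nu)$ is continuous on $\Pc_p(\Xc)\times\Pc_p(\Xc)$ follows in the same way, or more directly from the observation that a metric is $1$-Lipschitz, hence continuous, in each of its two arguments with respect to the product metric.

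The proof has essentially no obstacle. The only point deserving a word of care is the appeal to finiteness of $W_p$ on $\Pc_p(\Xc)$, needed to avoid an $\infty-\infty$ when rearranging the triangle inequality — which is precisely the reason the domain was restricted to measures of order $p$. Everything else is immediate from Theorem \ref{t:metrization}.
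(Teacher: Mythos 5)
Your proof is correct and is exactly the argument the paper intends: the paper states only ``By the triangle inequality, we obtain the corollary,'' and your two applications of the triangle inequality yielding $\bigl|W_p(\mu_k,\nu_k)-W_p(\mu,\nu)\bigr|\le W_p(\mu_k,\mu)+W_p(\nu_k,\nu)$, followed by Theorem \ref{t:metrization}, is precisely that argument spelled out. Your remark on finiteness of $W_p$ on $\Pc_p(\Xc)$ is a sensible point of care but does not change the route.
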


\subsection{The entropic optimal transport problem }

The OT distance is typically difficult to compute and suffers from the curse of dimensionality in empirical estimation, such that the number of samples required for reliable estimation increases exponentially with the dimension. To address these issues, regularized optimal transport methods have been introduced to enhance computational efficiency. The entropic optimal transport problem introduces a regularization term based on the Kullback-Leibler divergence to the original optimal transport problem. This regularization term encourages the optimal transport plan to be more diffuse and spread out, as opposed to the concentrated optimal transport plans that may occur in the unregularized problem. The parameter $\varepsilon$ controls the strength of the regularization, with larger values of $\varepsilon$ resulting in more diffuse optimal transport plans. The entropic optimal transport problem can be defined as follows:

\begin{definition}
For two probability measures $\mu, \nu\in \Pc_p(\Xc)$ and a regularization parameter $\varepsilon > 0$, the entropic optimal transport problem between $\mu$ and $\nu$ is defined by the formula
\begin{equation*}
W_{\varepsilon}(\mu, \nu) := \inf_{\pi \in \Pi(\mu, \nu)} \int_{\mathcal{X} \times \mathcal{X}} c(x, y) \pi({\D x}, {\D y}) + \varepsilon \mathrm{KL}(\pi | \mu \otimes \nu) 
\end{equation*}
where $\mathrm{KL}(\cdot | \cdot)$ denotes the Kullback-Leibler divergence between two probability measures, defined as
\begin{equation*}
\mathrm{KL}_(\mu  | \nu) := \int_{\mathcal{X} \times \mathcal{X}} \log\left(\frac{\D\mu}{\D\nu}\right) \D\mu,
\end{equation*}
if $\mu$ is absolutely continuous with respect to $\nu$, and $+\infty$ otherwise.
\end{definition}

To address the issue that the regularized Wasserstein distance $W_{\varepsilon}(\mu, \mu)$ is not equal to zero, Genevay et al. (2018) introduced Sinkhorn divergences, a debiased version of the regularized Wasserstein distance:
$$
\bar{W}_{\varepsilon}(\mu, \nu)=W_{\varepsilon}(\mu, \nu)-\frac{1}{2}\left(W_{\varepsilon}(\mu, \mu)+W_{\varepsilon}(\nu, \nu)\right).
$$
This normalization ensures that $\bar{W}_{\varepsilon}(\mu, \mu)=0$. The following theorems study the sample complexity and the central limit theorem of entropic optimal transport ${W}_{\varepsilon}(\mu, \mu)$, and these results can be extended to Sinkhorn divergences $\bar{W}{\varepsilon}(\mu, \nu)$ as well.

\cite[Thm 1]{Genevay_Chizat_Bach_Cuturi_Peyré_2018} quantifies the approximation error when estimating the Wasserstein distance between two probability measures using a regularized version, by providing an upper bound. 

\begin{theorem}(\cite[Thm 1]{Genevay_Chizat_Bach_Cuturi_Peyré_2018} )
Let $\mu$ and $\nu$ be probability measures on subsets $\mathcal{X}$ and $\mathcal{Y}$ of $\mathbb{R}^d$, respectively, with cardinalities bounded by $D$. Assume the cost function $c$ is L-Lipschitz with respect to both $x$ and $y$. The approximation error when estimating the Wasserstein distance $W(\mu, \nu)$ using the regularized Wasserstein distance $W_{\varepsilon}(\mu, \nu)$ is bounded by
$$
0 \leqslant W_{\varepsilon}(\mu, \nu)-W(\mu, \nu) \leqslant 2 \varepsilon d \log \left(\frac{e^2 \cdot L \cdot D}{\sqrt{d} \cdot \varepsilon}\right),
$$
which asymptotically behaves like $2 \varepsilon d \log (1 / \varepsilon)$ as $\varepsilon$ approaches 0.
\end{theorem}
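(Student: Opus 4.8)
\emph{Proof strategy.} The lower bound $W(\mu,\nu)\le W_{\varepsilon}(\mu,\nu)$ is immediate and I would dispatch it first: for every $\pi\in\Pi(\mu,\nu)$ the Kullback--Leibler term is non-negative, so $\int c\,\D\pi+\varepsilon\,\mathrm{KL}(\pi\,|\,\mu\otimes\nu)\ge\int c\,\D\pi\ge W(\mu,\nu)$, and taking the infimum over $\pi$ preserves the inequality. The substance is the upper bound, and the plan is to exhibit, for a free integer parameter $N\ge 1$, a single admissible competitor $\widehat\pi\in\Pi(\mu,\nu)$ whose regularized cost exceeds $W(\mu,\nu)$ by at most a discretization term of order $LD/N$ plus an entropy term of order $\varepsilon d\log N$, and then to optimize over $N$.

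For the construction I would enclose $\mathcal{X}$ and $\mathcal{Y}$ in bounding cubes of side $D$ (absorbing a dimensional constant into $D$ if necessary) and cut each into $N^{d}$ congruent sub-cubes $\{Q_{i}\}$ and $\{R_{j}\}$ of side $D/N$. Taking a $W$-optimal plan $\pi^{\star}$, set
\begin{equation*}
\widehat\pi\;=\;\sum_{i,j\,:\,\mu(Q_{i})\nu(R_{j})>0}\pi^{\star}(Q_{i}\times R_{j})\;\frac{\mu|_{Q_{i}}}{\mu(Q_{i})}\otimes\frac{\nu|_{R_{j}}}{\nu(R_{j})} .
\end{equation*}
A short marginal check gives $\widehat\pi\in\Pi(\mu,\nu)$: summing the $\mathcal{X}$-marginals of the blocks over $j$ yields $\big(\sum_{j}\pi^{\star}(Q_{i}\times R_{j})\big)\,\mu|_{Q_{i}}/\mu(Q_{i})=\mu|_{Q_{i}}$, then summing over $i$ gives $\mu$, and symmetrically for $\nu$. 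Since $\widehat\pi$ is admissible, $W_{\varepsilon}(\mu,\nu)\le\int c\,\D\widehat\pi+\varepsilon\,\mathrm{KL}(\widehat\pi\,|\,\mu\otimes\nu)$, so the problem reduces to bounding these two quantities.

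For the cost term, $\widehat\pi$ and $\pi^{\star}$ put the same total mass on each block $Q_{i}\times R_{j}$, while $c$ is $L$-Lipschitz in each argument and each $Q_{i}$, $R_{j}$ has diameter $\sqrt d\,(D/N)$, so the oscillation of $c$ on a block is at most $2L\sqrt d\,D/N$ and $\int c\,\D\widehat\pi\le W(\mu,\nu)+2L\sqrt d\,D/N$. For the entropy term, the crucial observation is that $\widehat\pi$ and $\mu\otimes\nu$ have the \emph{same} within-block conditional laws $\mu|_{Q_{i}}/\mu(Q_{i})\otimes\nu|_{R_{j}}/\nu(R_{j})$, so these cancel in the relative entropy and only the discrete part survives:
\begin{equation*}
\mathrm{KL}(\widehat\pi\,|\,\mu\otimes\nu)\;=\;\sum_{i,j}\pi^{\star}(Q_{i}\times R_{j})\,\log\frac{\pi^{\star}(Q_{i}\times R_{j})}{\mu(Q_{i})\,\nu(R_{j})} ,
\end{equation*}
which is the mutual information of the block-discretized coupling and hence at most the block entropy $\log(N^{d})=d\log N$. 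Combining, $W_{\varepsilon}(\mu,\nu)-W(\mu,\nu)\le 2L\sqrt d\,D/N+\varepsilon d\log N$; taking $N$ to be an integer near $2LD/(\sqrt d\,\varepsilon)$ balances the two terms and gives a bound of the advertised shape, namely $\varepsilon d$ times the logarithm of a fixed multiple of $LD/(\sqrt d\,\varepsilon)$, whence the $\varepsilon d\log(1/\varepsilon)$-type asymptotics as $\varepsilon\to 0$.

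The hard part will be the relative-entropy estimate: one must argue cleanly that the continuous mass of $\widehat\pi$ inside each block coincides with that of $\mu\otimes\nu$ (so it contributes nothing), restrict every sum to blocks of positive mass so the conditional laws are well defined, and then invoke $I\le H\le d\log N$ on the residual discrete term — the zero-mass blocks and the fact that the grid need not be aligned with the supports of $\mu$ and $\nu$ are the only delicate points. Recovering the precise constants in the statement — the prefactor $2$ (and hence the $2\varepsilon d\log(1/\varepsilon)$ asymptotics) and the $e^{2}$ inside the logarithm — will require slightly lossier forms of the two estimates together with careful rounding of $N$, but this is bookkeeping rather than a new idea.
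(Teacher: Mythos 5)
This theorem is imported by citation from Genevay et al.\ and the paper supplies no proof of its own, so there is nothing internal to compare against; your block-approximation argument is, however, essentially the original proof of the cited result (restrict the optimal plan to a grid of $N^d$ cubes, pay $2L\sqrt{d}\,D/N$ in cost by Lipschitzness, bound the relative entropy by the mutual information of the discretized coupling and hence by $d\log N$, then optimize $N\asymp LD/(\sqrt{d}\,\varepsilon)$). The marginal check, the cancellation of the within-block conditionals in the KL term, and the trade-off are all sound, and the remaining discrepancy with the stated constants ($e^2$ and the prefactor $2$) is indeed only the rounding and slack you already flag.
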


\cite{Genevay_Chizat_Bach_Cuturi_Peyré_2018} also bounds the expected difference between the regularized Wasserstein distances of two measures and their empirical counterparts

\begin{theorem}(\cite[Thm 3]{Genevay_Chizat_Bach_Cuturi_Peyré_2018} )
Consider the Sinkhorn divergence between two measures $\mu$ and $\nu$ on $\mathcal{X}$ and $\mathcal{Y}$, two bounded subsets of $\mathbb{R}^d$, with a $\mathcal{C}^{\infty}$, L-Lipschitz cost c. The expected difference between the regularized Wasserstein distances of the true measures and their empirical counterparts satisfies
$$
\mathbb{E}\left|W_{\varepsilon}(\mu, \nu)-W_{\varepsilon}\left(\hat{\mu}_n, \hat{\nu}_n\right)\right|=O\left(\frac{e^{\frac{\kappa}{\varepsilon}}}{\sqrt{n}}\left(1+\frac{1}{\varepsilon^{\lfloor d / 2\rfloor}}\right)\right),
$$
where $\kappa$ depends on $|\mathcal{X}|,|\mathcal{Y}|, d$, and $||c||_{\infty}$. Here, $\hat{\mu}_n$ and $\hat{\nu}_n$ are the  empirical counterparts from $\mu$ and $\nu$, respectively.
\end{theorem}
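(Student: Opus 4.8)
\emph{Proof idea.} The plan is to pass to the dual formulation of entropic optimal transport and to control the discrepancy through the optimal (Sinkhorn) dual potentials, which — this is the key structural fact — turn out to lie in a fixed ball of a reproducing kernel Hilbert space (RKHS) whose radius absorbs all of the $\varepsilon$-dependence. Recall that $W_\varepsilon(\mu,\nu)$ has the dual representation
$$
W_\varepsilon(\mu,\nu)=\sup_{f\in\mathcal{C}(\mathcal{X}),\,g\in\mathcal{C}(\mathcal{Y})}\Phi_{\mu,\nu}(f,g),\qquad \Phi_{\mu,\nu}(f,g):=\int f\,d\mu+\int g\,d\nu-\varepsilon\!\int\!\!\int e^{(f(x)+g(y)-c(x,y))/\varepsilon}\,d\mu(x)\,d\nu(y)+\varepsilon,
$$
the supremum being attained at potentials solving the coupled soft-min fixed-point equations $f^\star(x)=-\varepsilon\log\int e^{(g^\star(y)-c(x,y))/\varepsilon}\,d\nu(y)$ and its symmetric counterpart. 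First I would sandwich the quantity of interest: since $(f^\star,g^\star)$ is admissible (though suboptimal) for the empirical dual problem and the empirical optimizer $(\hat f,\hat g)$ is admissible for the population one,
$$
\big|W_\varepsilon(\mu,\nu)-W_\varepsilon(\hat\mu_n,\hat\nu_n)\big|\le\max\Big\{\big|\Phi_{\mu,\nu}(f^\star,g^\star)-\Phi_{\hat\mu_n,\hat\nu_n}(f^\star,g^\star)\big|,\ \big|\Phi_{\mu,\nu}(\hat f,\hat g)-\Phi_{\hat\mu_n,\hat\nu_n}(\hat f,\hat g)\big|\Big\},
$$
so it suffices to bound $|\Phi_{\mu,\nu}(f,g)-\Phi_{\hat\mu_n,\hat\nu_n}(f,g)|$ uniformly over the set of potentials that can arise as such optima.

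The second ingredient, which I expect to be the main obstacle, is a smoothness estimate: I would show that any such optimal potential $f$ (resp.\ $g$) lies in the ball of radius $\lambda_\varepsilon:=O\big(e^{\kappa/\varepsilon}(1+\varepsilon^{-\lfloor d/2\rfloor})\big)$ of an RKHS $H_{\mathcal{X}}$ on $\mathcal{X}$ (resp.\ $H_{\mathcal{Y}}$ on $\mathcal{Y}$) whose kernel satisfies $\sup_x k(x,x)\le 1$. The mechanism is to differentiate the soft-min relation: because $c\in\mathcal{C}^\infty$ and $\mathcal{X},\mathcal{Y}$ are bounded, $\nabla f^\star$ is a Gibbs-weighted average of $\nabla_x c$, hence $\|\nabla f^\star\|_\infty\le L$, and each further differentiation either falls on a higher derivative of $c$ or differentiates a Gibbs weight, the latter producing a factor $\varepsilon^{-1}$; iterating to order $s=\lfloor d/2\rfloor+1$ yields $\|f^\star\|_{\mathcal{C}^s}=O(\varepsilon^{-s})$, while the normalized potentials obey $\|f^\star\|_\infty,\|g^\star\|_\infty\lesssim\|c\|_\infty$, which forces the Gibbs weights — and hence the implied constants — to be of order $e^{\|c\|_\infty/\varepsilon}$, producing the $e^{\kappa/\varepsilon}$ factor. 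The Sobolev embedding $H^s(\mathbb{R}^d)\hookrightarrow\mathcal{C}^0$, valid precisely for $s>d/2$, lets one take $H^s$ (restricted to the domain) as the RKHS and accounts for the exponent $\lfloor d/2\rfloor$. Since this bound depends only on $c,\varepsilon,\mathcal{X},\mathcal{Y}$ and not on the underlying measures, it applies at once to $(f^\star,g^\star)$ and to $(\hat f,\hat g)$.

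With the RKHS confinement in hand, the remaining step is routine empirical-process control. Writing $\phi(x)=k(x,\cdot)$ for the feature map and $\mu_\phi=\int\phi\,d\mu$ for the kernel mean embedding, each linear term obeys $|\int f\,d(\mu-\hat\mu_n)|=|\langle f,\mu_\phi-\widehat{\mu}_{n,\phi}\rangle|\le\|f\|_{H_{\mathcal{X}}}\,\|\mu_\phi-\widehat{\mu}_{n,\phi}\|_{H_{\mathcal{X}}}$, and by Jensen $\mathbb{E}\|\mu_\phi-\widehat{\mu}_{n,\phi}\|_{H_{\mathcal{X}}}\le\sqrt{\mathbb{E}k(X,X)/n}$, so this contributes $O(\lambda_\varepsilon/\sqrt n)$; for the bilinear exponential term I would regard $(x,y)\mapsto e^{(f(x)+g(y)-c(x,y))/\varepsilon}$ as an element of the tensor-product RKHS on $\mathcal{X}\times\mathcal{Y}$ — one more round of the same derivative bookkeeping bounds its norm by $O(\lambda_\varepsilon)$ up to adjusting $\kappa$, since $t\mapsto e^t$ is smooth on the bounded range of its argument — and then use the splitting $\mu\otimes\nu-\hat\mu_n\otimes\hat\nu_n=(\mu-\hat\mu_n)\otimes\nu+\hat\mu_n\otimes(\nu-\hat\nu_n)$ together with the same mean-embedding bound on each factor. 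Summing the linear and bilinear contributions gives $\mathbb{E}\big|W_\varepsilon(\mu,\nu)-W_\varepsilon(\hat\mu_n,\hat\nu_n)\big|=O(\lambda_\varepsilon/\sqrt n)$, which is exactly the asserted rate; the debiased Sinkhorn divergence $\bar W_\varepsilon$ then follows by applying this to each of its three $W_\varepsilon$ terms and using the triangle inequality. The delicate point, as noted, is the second paragraph: one must check that the $\lfloor d/2\rfloor+1$ derivatives demanded by the RKHS embedding, the $\varepsilon^{-1}$ factor accrued per differentiation, and the exponential blow-up $e^{\|c\|_\infty/\varepsilon}$ of the Gibbs weights assemble into precisely $e^{\kappa/\varepsilon}(1+\varepsilon^{-\lfloor d/2\rfloor})$ rather than something worse.
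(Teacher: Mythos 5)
This statement is not proved in the paper at all: it is quoted verbatim from the cited reference (Theorem~3 of Genevay, Chizat, Bach, Cuturi and Peyr\'e) as background material in the appendix, so there is no in-paper proof to compare against. Judged on its own terms, your proposal is essentially a faithful reconstruction of the argument in that reference: dual formulation, confinement of the optimal potentials to a Sobolev ball $H^s$ with $s=\lfloor d/2\rfloor+1$ whose radius carries the $\varepsilon$-dependence (their Theorem~2), and an $O(\lambda_\varepsilon/\sqrt n)$ empirical-process bound over that ball. The two places where you genuinely deviate are benign. First, you control the deviation over the RKHS ball via kernel mean embeddings rather than Rademacher complexity; for a ball in an RKHS these give the same $\lambda_\varepsilon/\sqrt n$ rate, so nothing is lost. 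Second, you keep the full two-potential dual and treat the bilinear exponential term as an element of a tensor-product RKHS, whereas the original works with the semi-dual (eliminating $g$ by the smoothed $c$-transform) so that only univariate function classes appear and the exponential is handled by the Lipschitz contraction principle; your route accrues extra powers of $\varepsilon^{-1}$ from differentiating the Gibbs factor, but, as you correctly observe, any polynomial in $\varepsilon^{-1}$ can be absorbed into $e^{\kappa/\varepsilon}$ for small $\varepsilon$ and is dominated by the constant for large $\varepsilon$, so the stated form $e^{\kappa/\varepsilon}\bigl(1+\varepsilon^{-\lfloor d/2\rfloor}\bigr)/\sqrt n$ survives. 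The one step you should not wave at is the derivative bookkeeping establishing $\|f^\star\|_{H^s}=O\bigl(e^{\kappa/\varepsilon}(1+\varepsilon^{-(s-1)})\bigr)$; that is the substantive lemma of the cited paper and the rest is routine.
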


Notably, several important statistical bound was derived for the entropic OT with the squared Euclidean cost between subgaussian probability measures in arbitrary dimension \cite{Mena_Weed_2019, CLT_sinkhorn}. 

\begin{theorem} (\cite[Thm 3.6]{CLT_sinkhorn})
Let $\mu, \nu \in \mathcal{P}\left(\mathbb{R}^d\right)$ be $\sigma$-subgaussian probability measures. Then,

$$
\sqrt{n}({W}_\varepsilon(\mu_n, \nu)-{W}_\varepsilon(\mu, \nu)) \xrightarrow{w} N(0, \operatorname{Var}_\mu(f_\varepsilon)),
$$
where $\left(f_\varepsilon, g_\varepsilon\right)$ are optimal potentials for ${W}_\varepsilon(\mu, \nu)$. Moreover, if $\lambda:=\lim_{n, m \rightarrow \infty} \frac{n}{n+m} \in(0,1)$,
$$
\sqrt{\frac{n m}{n+m}}({W}_\varepsilon(\mu_n, \nu_m)-{W}_\varepsilon(\mu, \nu)) \xrightarrow{w} N(0,(1-\lambda) \operatorname{Var}_\mu(f_\varepsilon)+\lambda \operatorname{Var}_\nu(g_\varepsilon)).
$$
\end{theorem}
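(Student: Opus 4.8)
The plan is to prove the central limit theorem by a linearization (functional delta method) of the entropic cost around the true optimal potentials, which reduces everything to an ordinary CLT for an empirical average. The natural starting point is the Sinkhorn dual of the entropic problem defined in the excerpt,
\[
W_\varepsilon(\mu,\nu) = \sup_{f,g}\left\{\int f\,\D\mu + \int g\,\D\nu - \varepsilon\int \big(e^{(f(x)+g(y)-c(x,y))/\varepsilon}-1\big)\,\mu(\D x)\,\nu(\D y)\right\},
\]
whose maximizers are the optimal potentials $(f_\varepsilon,g_\varepsilon)$, unique up to an additive constant and characterized by the Schrödinger/Sinkhorn fixed-point equations. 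Two structural facts about these potentials are needed as inputs, both available for $\sigma$-subgaussian marginals from the regularity theory of entropic OT (\cite{Mena_Weed_2019} and follow-ups): (i) $f_\varepsilon$ and $g_\varepsilon$ grow at most quadratically, so that $f_\varepsilon\in L^2(\mu)$ and $g_\varepsilon\in L^2(\nu)$, which in particular makes $\operatorname{Var}_\mu(f_\varepsilon)$ finite; and (ii) a quantitative stability estimate — a perturbation of one marginal perturbs the optimal potentials by a comparable amount in a suitable quadratically-weighted norm.

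First I would establish the first-order expansion
\[
W_\varepsilon(\mu_n,\nu) - W_\varepsilon(\mu,\nu) = \int f_\varepsilon\,\D(\mu_n-\mu) + R_n ,
\]
which is an envelope-theorem statement: since $f_\varepsilon$ is optimal for the marginal $\mu$, the Gâteaux derivative of the optimal value in the direction $\mu_n-\mu$ is exactly $\int f_\varepsilon\,\D(\mu_n-\mu)$. The substance of the argument is showing $R_n$ is genuinely second order, i.e.\ $R_n=o_{\mathbb P}(n^{-1/2})$. I would bound $|R_n|$ by combining strong concavity of the dual objective in the potentials (coming from the exponential term) with the stability estimate (ii): writing $f_{\varepsilon,n}$ for the optimal potential of $(\mu_n,\nu)$, the remainder is controlled by $\|f_{\varepsilon,n}-f_\varepsilon\|^2$ together with an empirical-process fluctuation $\int (f_{\varepsilon,n}-f_\varepsilon)\,\D(\mu_n-\mu)$. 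The first term is $O_{\mathbb P}(n^{-1})$ because $\|f_{\varepsilon,n}-f_\varepsilon\|=O_{\mathbb P}(n^{-1/2})$ by stability, and the second is $o_{\mathbb P}(n^{-1/2})$ because the class of near-optimal potentials is $\mu$-Donsker — here the quadratic-growth and smoothness bounds on the potentials supply the metric-entropy control, with the subgaussian tails handling the non-compact domain (one also checks that replacing the reference measure $\mu\otimes\nu$ by $\mu_n\otimes\nu$ contributes only to this lower-order term).

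Given the expansion, the one-sample CLT is immediate: $\sqrt{n}\int f_\varepsilon\,\D(\mu_n-\mu)=n^{-1/2}\sum_{i=1}^n\big(f_\varepsilon(X_i)-\mathbb E_\mu f_\varepsilon\big)\xrightarrow{w} N(0,\operatorname{Var}_\mu(f_\varepsilon))$ by the classical CLT (valid since $f_\varepsilon\in L^2(\mu)$), and Slutsky's lemma absorbs $\sqrt{n}\,R_n=o_{\mathbb P}(1)$. For the two-sample statement I would run the same linearization in both marginals at once,
\[
W_\varepsilon(\mu_n,\nu_m)-W_\varepsilon(\mu,\nu) = \int f_\varepsilon\,\D(\mu_n-\mu) + \int g_\varepsilon\,\D(\nu_m-\nu) + o_{\mathbb P}\big((n\wedge m)^{-1/2}\big),
\]
and exploit independence of the two samples: the two leading terms are asymptotically independent Gaussians, so with $n/(n+m)\to\lambda$ the normalization $\sqrt{nm/(n+m)}$ yields $N\big(0,(1-\lambda)\operatorname{Var}_\mu(f_\varepsilon)+\lambda\operatorname{Var}_\nu(g_\varepsilon)\big)$.

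The hard part is the bound $R_n=o_{\mathbb P}(n^{-1/2})$, which is where essentially all the analytic work goes. It requires the \emph{quantitative} stability of entropic potentials under perturbation of a marginal — the $n^{-1/2}$ rate in an appropriate weighted norm is essential, mere qualitative continuity does not suffice — together with a Donsker-type empirical-process bound over a class of potentials whose complexity is controllable only after invoking the smoothness and at-most-quadratic-growth estimates for $f_\varepsilon$ on the unbounded subgaussian domain. By contrast, the envelope expansion and the final CLT assembly are routine once these two ingredients are in place.
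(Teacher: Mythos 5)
This statement is quoted in the paper verbatim from the cited reference (\cite[Thm 3.6]{CLT_sinkhorn}); the paper supplies no proof of it, so there is no internal argument to compare yours against. Judged against the proof in the cited literature, your outline is essentially the right one: the published argument is precisely a linearization of the entropic cost via the dual problem, an envelope-type first-order expansion with leading term $\int f_\varepsilon\,\D(\mu_n-\mu)$, control of the remainder by empirical-process arguments over a class containing the (smooth, at-most-quadratically-growing) potentials, and a classical CLT plus independence of the two samples for the two-sample normalization. One technical point where your plan diverges from (and is arguably harder than) the actual proof: you assert that a \emph{quantitative} $n^{-1/2}$ stability estimate for the potentials is essential. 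The standard argument avoids this by sandwiching the remainder between two dual-objective differences --- evaluating the $(\mu,\nu)$-optimal potentials against $(\mu_n,\nu)$ and vice versa --- which reduces $R_n$ to terms of the form $\int (f_{\varepsilon,n}-f_\varepsilon)\,\D(\mu_n-\mu)$; these are $o_{\mathbb P}(n^{-1/2})$ by asymptotic equicontinuity of the empirical process over a fixed Donsker class together with merely \emph{qualitative} $L^2(\mu)$ convergence of the potentials. So the genuinely hard inputs are the uniform smoothness and growth bounds on $f_\varepsilon,g_\varepsilon$ for subgaussian marginals (making the class Donsker on an unbounded domain) rather than a rate-optimal stability bound. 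As a proof your submission is a plan rather than a complete argument --- both black-boxed inputs are where all the analytic work lives, as you acknowledge --- but the architecture is faithful to how the result is actually established.
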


One important advantage of the Central limit theorem on Sinkhorn divergences is that it can be exploited for inferential purposes and enables the build of confidence intervals.

Extending the ITD framework to incorporate entropic optimal transport presents several important advantages. This enhancement can offer a reliable and computationally efficient approximation to the original ITD, particularly in high-dimensional settings—an area where traditional optimal transport methods often struggle due to scalability issues. This is especially relevant in our use case. The ITD framework, particularly with the integration of entropic optimal transport, holds significant promise for evaluating multimodal data such as text, images, and video as well. By first applying embedding and feature extraction techniques to these diverse data types \cite{xu2025mdvt,xu2025mentor,xu2024aligngroup,lin2025camerabench,you2025multi}, we can transform them into a common representation space. Once in this space, the framework can be leveraged to compare and contrast distributions of these extracted features.

One of the key challenges in distributed or federated learning is the computational overhead and power consumption on client devices, which often have limited processing capabilities and energy constraints \cite{Zawad_Ma_Yi_Li_Zhang_Yang_Yan_He_2025,Ali_Ma_Zawad_Aditya_Akkus_Chen_Yang_Yan_2025}. As a result, minimizing resource usage while maintaining statistical accuracy becomes critical. Incorporating entropic regularization into the ITD framework can significantly reduce the computational burden, making it more practical for deployment across a large number of heterogeneous and resource-constrained clients. In addition, the entropic regularization has been proven to maintain good statistical properties in empirical settings. The well-established theoretical results for entropic optimal transport, including convergence rates and central limit theorems, could be leveraged to develop more robust confidence intervals and hypothesis tests.

Nonetheless, extending ITD to entropic optimal transport also presents significant theoretical challenges when considering the established results of the original ITD framework. The fundamental metric properties of the ITD on the space of kernels would need to be re-examined, as the addition of entropic regularization might affect these properties. The convergence behavior of sequences of kernels and measures in the entropic setting would require careful analysis, potentially leading to modified conditions for weak convergence.
These challenges highlight the need for a comprehensive theoretical analysis to bridge the gap between the established properties of entropic optimal transport and the distributed nature of the ITD framework. Such an analysis would need to carefully consider how the regularization parameter interacts with the sample sizes, the number of distributed nodes, and the underlying distributions to ensure that the desirable properties of both entropic optimal transport and ITD are preserved in the combined framework.

\bibliographystyle{unsrt}  
\bibliography{references}  
\end{document}